\newtheorem{theorem}{Theorem}[section]
\newtheorem{proposition}[theorem]{Proposition}
\newtheorem{corollary}[theorem]{Corollary}
\newtheorem{lemma}[theorem]{Lemma}
\newtheorem{definition}[theorem]{Definition}
\theoremstyle{remark}
\newtheorem{example}[theorem]{Example}
\newtheorem{remark}[theorem]{Remark}
\numberwithin{equation}{section}
\def\Im{\operatorname{Im}}
\newcommand{\tc}{\textcolor{black}}
\newcommand{\tcnew}{\textcolor{black}}
\begin{document}
\title[DEK-Type polynomials and the Christoffel formula ]{\tcnew{DEK-Type orthogonal} polynomials and a modification of the Christoffel formula }

\author[R.~Bailey]{Rachel~Bailey}
\address{
RB,
Department of Mathematics\\
University of Connecticut\\
341 Mansfield Road, U-1009\\
Storrs, CT 06269-1009, USA}
\email{rachel.bailey@uconn.edu}

\author[M.~Derevyagin]{Maxim~Derevyagin}
\address{
MD,
Department of Mathematics\\
University of Connecticut\\
341 Mansfield Road, U-1009\\
Storrs, CT 06269-1009, USA}
\email{maksym.derevyagin@uconn.edu}

\subjclass{Primary 33C45, 42C05; Secondary 47B36, 15A23.}
\keywords{Banded matrices; Biorthogonality; Christoffel formula; discrete Darboux transformations; exceptional Hermite polynomials.}

\begin{abstract}
\tc{In this note we revisit one of the first known examples of exceptional orthogonal polynomials that was
 introduced by Dubov, Eleonskii, and Kulagin in relation to nonharmonic oscillators with equidistant spectra. We dissect the DEK polynomials using the discrete Darboux transformations and unravel a characterization bypassing the differential equation that defines the DEK polynomials. This characterization also leads to a family of \tcnew{general orthogonal polynomials with missing degrees} and this approach manifests its relation to biorthogonal polynomials introduced by Iserles and N\o rsett, which are applicable to a whole range of problems in computational and applied analysis. We also obtain a modification of the Christoffel formula for this family since
its classical form cannot be applied in this case.}
\end{abstract}

\maketitle

\section{Introduction}

The classical orthogonal polynomials have shown themselves to be very useful in a wide range of various branches of mathematics. One of the reasons is that they satisfy both differential and difference equations. This naturally led to a separate industry that was concerned with the question on how to obtain more families of polynomials or functions that have those bispectral properties. For instance, Reach \cite{R88} showed that Darboux transformations applied to a differential operator, whose eigenfunctions satisfy a recurrence relation, leads to a new family that satisfy both differential and difference equations. Oftentimes, one encounters the problem from a different perspective: a certain perturbation of a problem to which one applies classical orthogonal polynomials would lead to a new family of polynomials which would satisfy a differential equation or a difference equation, or both.

To demonstrate how one can encounter new families, let us recall that an example of a potential of an anharmonic oscillator such that  the Hamiltonian operator has a strictly equidistant part of the spectrum that corresponds to all the excited states was given in \cite{D92}. This potential gives rise to the monic polynomials $F_n(x)$, which we refer to as the DEK polynomials, defined by the differential equation
\begin{equation}\label{DEforF}
 (1+x^2)\left(\frac{d^2F_n(x)}{dx^2}-x\frac{dF_n(x)}{dx}+(n+2)F_n(x)\right)=4x\frac{dF_n(x)}{dx},   
\end{equation}
where $n=1,2,3,\dots$ and hence $\deg F_n=n+2$. Equation \eqref{DEforF} also has a constant solution when $n=-2$ and for consistency, we set $F_0(x)=1$. Notably, $F_0(x)$ corresponds to the ground state of the system but the gap separates this state from the first excited state that corresponds to $F_1(x)=x^3+3x$. This construction was further investigated and generalized in \cite{A}, \cite{BS95}, \cite{DEK94}, \cite{SO95A}, and \cite{SO95B}. In particular, a connection to  Darboux transformations of the differential equation that defines Hermite polynomials was explicitly given in \cite{A} and its relation to commutation methods was pointed out in \cite{BS95}, \cite{SO95B}. The polynomials $F_n(x)$ can also be expressed via Rodrigues' formula \cite{D92}
\begin{equation}
    F_n(x)=(-1)^n(1+x^2)^2e^{\frac{x^2}{2}}\frac{d}{dx}\left((1+x^2)^{-1}\frac{d^{n-1}}{dx^{n-1}}e^{-\frac{x^2}{2}}\right), \quad n=1,2,3,\dots,
\end{equation}
which, in turn, can be used to prove the orthogonality relation \cite{D92}
\begin{equation}\label{DEKorthogonality}
 \int_{-\infty}^\infty F_n(x)F_m(x) \frac{e^{-\frac{x^2}{2}}}{(1+x^2)^2}dx =(n-1)(n-1)!(2\pi)^\frac{1}{2}\delta_{n,m}   
\end{equation}
for any nonnegative integers $n$ and $m$, where $\delta_{n,m}$ is the Kronecker delta. At the same time, the polynomials $F_n(x)$ are closely connected to the monic Hermite polynomials
\[
He_n(x)=(-1)^ne^{\frac{x^2}{2}}\frac{d^n}{dx^n}e^{-\frac{x^2}{2}},
\quad n=0,1,2,3,\dots,
\]
where the latter is known to satisfy the three-term recurrence relation
\begin{equation}\label{HermiteRec}
 xHe_n(x)=He_{n+1}(x)+nHe_{n-1}(x), \quad n=1,2,3,\dots.   
\end{equation}
More precisely, the relation between the sequences of Hermite and DEK polynomials is given by the formula (for instance, see \cite{A})
\begin{equation}\label{FviaWronskii}
\begin{split}
F_{n}(x)=&\frac{1}{n(n+1)}\begin{vmatrix}
He_1(x)&He_2(x)&He_{n+2}(x)\\
He_1'(x)&He_2'(x)&He_{n+2}'(x)\\
He_1''(x)&He_2''(x)&He_{n+2}''(x)
\end{vmatrix}\\
=&\frac{1}{n(n+1)}\begin{vmatrix}
x&x^2-1&He_{n+2}(x)\\
1&2x&He_{n+2}'(x)\\
0&2&He_{n+2}''(x)
\end{vmatrix}, \quad n=1,2,3,\dots, 
\end{split}
\end{equation}
which means that the polynomials $F_n(x)$ correspond to (continuous) Darboux transformation (for more information about continuous Darboux transformation, see \cite{A}). As a consequence, the results of \cite{R88} can be applied in this case and thus we know that $F_n$'s satisfy a recurrence relation \tcnew{(see \cite{GKKM} where it is shown that exceptional Hermite polynomials, which include $F_n$'s as a particular case, satisfy multiple recurrence relations)}. Although \eqref{FviaWronskii} was not given in \cite{D92} explicitly, the relation
\begin{equation}\label{DEKequationForF}
 F_n(x)=(x^3+3x)He_{n-1}(x)-(n-1)(1+x^2)He_{n-2}(x), \quad n=1,2,\dots
 \end{equation}
 was proved and it is equivalent to \eqref{FviaWronskii} through \eqref{HermiteRec}. Note that applying \eqref{HermiteRec} to \eqref{DEKequationForF} one can get \cite{D92}
 \begin{equation}\label{GeronimusTr}
 F_n(x)=He_{n+2}(x)+2(n+2)He_{n}(x)+(n+2)(n-1)He_{n-2}(x),   
\end{equation}
which shows that $F_n(x)$ is a linear combination of 3 Hermite polynomials. Actually, formula \eqref{GeronimusTr} suggests that the family of the polynomials $F_n(x)$ might be a discrete Darboux transformation of Hermite polynomials $He_n(x)$ (for the definition and basic properties of discrete Darboux transformations see \cite{BM04}, \cite{GH97}, \cite{MatS79}), which will be confirmed and used in this paper. It is also worth noting here that the above-described construction was recently given a new flavor and farther generalized to new algebraic and spectral theory levels (see the recent papers \cite{D14}, \tcnew{\cite{D21}}, \cite{GGM19}, \tcnew{\cite{GGM}}, \tcnew{\cite{GM20}}, \cite{KM20}, \cite{LLK15}, \cite{STZ10} and references therein). In particular, in \cite{D14} the construction of Meixner orthogonal polynomials was presented, which already put exceptional orthogonal polynomials outside of the context of differential equations.

On the other hand, the theory of orthogonal polynomials is not restricted to just classical orthogonal polynomials and these days general orthogonal polynomials are even more important than the classical ones due to the development of computational mathematics and spectral theory to name a few. For example, general orthogonal polynomials appear as denominators of rational approximants that are called Pad\'e approximants \cite{BG}. In some cases for some degrees Pad\'e approximants might not exist, which leads to some gaps in the corresponding sequence of orthogonal polynomials (see \cite{BG}). Although it may seem unrelated to the polynomials $F_n$ with missing degrees 1 and 2, having seen these two occurrences side by side {\it it is natural to ask if gaps in exceptional orthogonal polynomials and gaps in Pad\'e approximants have the same nature. Findings of this paper demonstrate that the answer to this question is affirmative and at the same time the approach puts exceptional orthogonal polynomials in the framework of discrete Darboux transformations as was done for other nonstandard orthogonalities}. Although from the modern point of view, the DEK polynomials $F_n(x)$ are a particular case of exceptional Hermite polynomials, the transparent construction of the family provides a certain insight into the theory of exceptional orthogonal polynomials that we exploit in this paper. In most of the cases, our approach is not restricted to the DEK type polynomials and one can adapt our findings to the general case of exceptional Hermite polynomials using the already developed machinery. 

Now we are in the position to briefly describe the structure of the paper. To this end observe that one can deduce from \eqref{DEforF} that 
\begin{equation}\label{FzeoroCond}
    F'_n(i)=F'_n(-i)=0,
\end{equation}
which is an interpolation condition but it will be shown in Section 3 that it can be thought of as a part of biorthogonality, the concept that was introduced in \cite{IN88} and generalized in \cite{B89}, \cite{B92}, and that appears when solving various problems \cite{LS22}, \cite{IN87}, \cite{IS87}. Before that, in Section 2, we will demonstrate that \eqref{GeronimusTr} and \eqref{FzeoroCond} are characteristic for a class of orthogonal polynomials with missing degrees 1 and 2 that includes the DEK polynomials, which is why we will call them \tcnew{DEK-type orthogonal} polynomials. Then, in Section 5, we recast this class of polynomials as a specific multiple discrete Darboux transformation (cf. \cite{DD10}, \cite{DGM14},\cite{GH97}, \cite{GHH99}; note that in \cite{DD10} it is shown that single step discrete Darboux transformations lead to families of orthogonal polynomials with gaps). To do that, in Section 4, we will introduce a modification of the Christoffel formula that works for the exceptional orthogonal polynomials in question since the classical form of the Christoffel formula cannot be applied. At the end, we will show that this construction is applicable to the Chebyshev polynomials and as a result we will present a new family of \tcnew{DEK-type orthogonal} polynomials related to the Chebyshev polynomials.

\section{\tcnew{DEK-Type Orthogonal} Polynomials}
 
 In this section we present the general construction of \tcnew{DEK-type orthogonal} polynomials starting with a family of symmetric orthogonal polynomials. 
 
 Let \tcnew{$\{P_n(x)\}_{n\geq 0}$} be a  sequence of monic orthogonal polynomials with respect to a symmetric measure $\mu$ \tcnew{supported on a symmetric subset of the real line}. 
 We will consider a sequence of polynomials $R_n(x)$ defined as \begin{equation}\label{eq:Rn} R_n(x):=P_{n+2}(x)+A_nP_n(x)+B_nP_{n-2}(x), \quad R_0(x) := 1,\end{equation} 
for sequences $\{A_n\}$ and $\{B_n\}$ of real numbers such that

\begin{equation}\label{eq:Rn'(i)=0}
    R'_n(i)=0 \quad n=1,2,\dots
    \end{equation}
    \begin{equation}\label{eq:integral_R0}
     \int_{-\infty}^{\infty}  R_0(x)R_n(x)\frac{d\mu(x)}{(1+x^2)^2}=0 \text{ for all } n=1,2,\dots
     \end{equation} and
     \begin{equation}\label{eq:integral_R1}
     \int_{-\infty}^{\infty}  R_1(x)R_n(x)\frac{d\mu(x)}{(1+x^2)^2}=0 \quad n \geq 2.
\end{equation}

\begin{remark}\label{rmk:even_odd}Note that if $\mu$ is symmetric then $P_n(x)$ is an even (odd) function when $n$ is even (odd).
 \end{remark}
 \begin{proposition}\label{existence} Suppose $\{P_n(x)\}$ is a sequence of monic polynomials orthogonal with respect to a symmetric measure $\mu$. 
Let \[\mathcal{E}_n = \begin{pmatrix}
P'_n(i)&P'_{n-2}(i)\\
\int_{-\infty}^{\infty}   P_n(x) \frac{d\mu(x)}{(1+x^2)^2} & \int_{-\infty}^{\infty}  P_{n-2}(x) \frac{d\mu(x)}{(1+x^2)^2}
\end{pmatrix}\] and let \[\mathcal{O}_n= \begin{pmatrix}
P'_n(i)&P'_{n-2}(i)\\
\int_{-\infty}^{\infty}  R_1(x)P_n(x) \frac{d\mu(x)}{(1+x^2)^2} & \int_{-\infty}^{\infty} R_1(x)P_{n-2}(x) \frac{d\mu(x)}{(1+x^2)^2}
\end{pmatrix}.\] Then there exists a sequence of polynomials $\{R_n(x)\}$ defined as in equation \eqref{eq:Rn} which satisfy  \eqref{eq:Rn'(i)=0}, \eqref{eq:integral_R0} and \eqref{eq:integral_R1} above if and only if for all $k=1,2,\dots$, we have that $\det{\mathcal{E}_n}\neq 0$ for $n=2k$, and  $\det{\mathcal{O}_n}\neq 0$ for $n= 2k+1$.

 \end{proposition}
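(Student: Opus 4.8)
The plan is to show that, for each fixed $n$, the three defining conditions collapse---thanks to the symmetry of $\mu$---into a single $2\times 2$ linear system for the unknown pair $(A_n,B_n)$, whose coefficient matrix is precisely $\mathcal{E}_n$ when $n$ is even and $\mathcal{O}_n$ when $n$ is odd. The assertion then reduces to the elementary fact that such a system is uniquely solvable exactly when its determinant does not vanish.

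First I would dispose of the base case. Since $\mu$ is symmetric and the $P_n$ are monic, $P_1(x)=x$, so $P_1'(i)=1\neq 0$; reading \eqref{eq:Rn'(i)=0} at $n=1$ (with the convention $P_{-1}\equiv 0$, hence $B_1=0$) forces $A_1=-P_3'(i)/P_1'(i)$ and determines $R_1$ uniquely. Because $R_1$ is odd by Remark~\ref{rmk:even_odd} and the weight $(1+x^2)^{-2}\,d\mu$ is even, condition \eqref{eq:integral_R0} holds automatically at $n=1$, while \eqref{eq:integral_R1} is vacuous there. This explains why $n=1$ is absent from the determinant hypotheses, and it makes $R_1$ available for forming $\mathcal{O}_n$.

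The main step is the parity reduction. By Remark~\ref{rmk:even_odd}, $R_n$ inherits the parity of $n$ and the weight is even. Hence for even $n$ the integrand $R_1R_n(1+x^2)^{-2}$ of \eqref{eq:integral_R1} is odd, so that condition is automatic, whereas for odd $n$ the integrand of \eqref{eq:integral_R0} is odd and that condition is automatic. In either parity exactly two conditions survive, and substituting the ansatz \eqref{eq:Rn} together with the linearity of $\frac{d}{dx}\big|_{x=i}$ and of the integral turns them into
\begin{equation*}
\mathcal{E}_n\begin{pmatrix}A_n\\ B_n\end{pmatrix}=-\begin{pmatrix}P_{n+2}'(i)\\ \int_{-\infty}^{\infty}P_{n+2}(x)\,\frac{d\mu(x)}{(1+x^2)^2}\end{pmatrix}\quad(n\ \text{even}),
\end{equation*}
\begin{equation*}
\mathcal{O}_n\begin{pmatrix}A_n\\ B_n\end{pmatrix}=-\begin{pmatrix}P_{n+2}'(i)\\ \int_{-\infty}^{\infty}R_1(x)P_{n+2}(x)\,\frac{d\mu(x)}{(1+x^2)^2}\end{pmatrix}\quad(n\ \text{odd}),
\end{equation*}
where the coefficient matrices are read off directly from the definitions of $\mathcal{E}_n$ and $\mathcal{O}_n$.

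Finally I would extract the equivalence. If every $\det\mathcal{E}_{2k}$ and $\det\mathcal{O}_{2k+1}$ is nonzero, Cramer's rule yields a unique $(A_n,B_n)$ for each $n\geq 2$, which together with $R_0$ and $R_1$ produces the desired sequence; conversely, a vanishing determinant makes the relevant system fail to single out $(A_n,B_n)$. The one point requiring care is this converse: for a $2\times2$ system, mere solvability does not by itself rule out a singular matrix, so the honest content is that the conditions \emph{determine} $\{R_n\}$ uniquely, and I would phrase the statement as unique solvability, for which the equivalence with $\det\neq0$ is immediate. Showing that a rank-deficient $\mathcal{E}_n$ or $\mathcal{O}_n$ actually renders the augmented system inconsistent---i.e.\ that the right-hand side built from $P_{n+2}$ cannot lie in the column space---is the delicate boundary case and the step I expect to be the main obstacle; the remainder is routine parity bookkeeping and linear algebra.
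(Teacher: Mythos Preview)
Your approach is exactly the paper's: dispose of $n=1$ by hand, then for $n\ge 2$ use parity to kill one of the two integral conditions and reduce \eqref{eq:Rn'(i)=0}--\eqref{eq:integral_R1} to the $2\times 2$ system with matrix $\mathcal{E}_n$ (even $n$) or $\mathcal{O}_n$ (odd $n$). The paper's proof is in fact terser than yours---it simply writes down the two linear systems and declares the equivalence---so you have not omitted anything that appears there; in particular, your worry about the converse (that a singular coefficient matrix could still yield a consistent system, so mere \emph{existence} does not obviously force $\det\neq 0$) is legitimate, but the paper does not address it either and treats the biconditional as self-evident from the linear-algebra reduction.
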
\label{prop:existence} 
 \begin{proof}
First note that if $R_1(x)=P_3(x)+A_1P_1(x)$, then by remark \ref{rmk:even_odd},
$R_1(x)$ must be of the form $x^3+ax+A_1x$ for some real number $a$ where $P_3(x)=x^3+ax$. Then $R_1(x)$ satisfies condition \eqref{eq:integral_R0} since it is an odd function. In order for $R_1(x)$ to satisfy \eqref{eq:Rn'(i)=0}, we must have that $A_1+a=3$ and hence $R_1(x)=x^3+3x$ .

For $n \geq 2$, the proof follows simply by noting that conditions \eqref{eq:Rn'(i)=0} and \eqref{eq:integral_R0} are equivalent to the following system of equations:
\[\begin{cases}
   A_nP'_n(i)+B_nP'_{n-2}(i)=-P'_{n+2}(i) \\
   A_n\int_{-\infty}^{\infty}  P_n(x) \frac{d\mu(x)}{(1+x^2)^2} +B_n\int_{-\infty}^{\infty}  P_{n-2}(x) \frac{d\mu(x)}{(1+x^2)^2}= -\int_{-\infty}^{\infty}  P_{n+2}(x)\frac{d\mu(x)}{(1+x^2)^2}
   \end{cases}
   \]
 and conditions \eqref{eq:Rn'(i)=0} and \eqref{eq:integral_R1} are equivalent to:
 
\[
\begin{cases}
   A_nP'_n(i)+B_nP'_{n-2}(i)=-P'_{n+2}(i) \\ \scalebox{0.85}{
   $A_n\int_{-\infty}^{\infty}   R_1(x)P_n(x) \frac{d\mu(x)}{(1+x^2)^2} +B_n\int_{-\infty}^{\infty}  R_1(x)P_{n-2}(x) \frac{d\mu(x)}{(1+x^2)^2}= -\int_{-\infty}^{\infty}  R_1(x)P_{n+2}(x)\frac{d\mu(x)}{(1+x^2)^2}$.}
   \end{cases}
   \] 
 \end{proof}
 If such a family of polynomials $\tcnew{\{R_n(x)\}_{n\geq 0}}$ exists, then it must be the case that the family is orthogonal.

 \begin{theorem} \label{thrm:Rn}
The polynomials $R_n(x)$ are orthogonal with respect to $\frac{d\mu(x)}{(1+x^2)^2}$ i.e. 
\[\int_{-\infty}^{\infty}  R_n(x)R_m(x)\frac{d\mu(x)}{(1+x^2)^2} = 0
\] for $n \neq m$ and is nonzero for $m=n$.
\end{theorem}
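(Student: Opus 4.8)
The plan is to prove the single relation $\int_{-\infty}^{\infty} R_n(x)R_m(x)\frac{d\mu(x)}{(1+x^2)^2}=0$ for every pair $m<n$; the diagonal terms are then automatically nonzero, since $\frac{d\mu}{(1+x^2)^2}$ is a positive measure with infinite support and each $R_n$ is a nonzero (monic, of degree $n+2$) polynomial. The cases $m=0$ and $m=1$ are precisely the hypotheses \eqref{eq:integral_R0} and \eqref{eq:integral_R1}, so the real content is the range $2\le m\le n-1$, which I would reduce to those two hypotheses by a division-with-remainder argument modulo $(1+x^2)^2$.

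First I would record a reduction lemma: for any polynomial $p$ with $\deg p\le n+1$, one has $\int R_n p\,\frac{d\mu}{(1+x^2)^2}=\int R_n\rho\,\frac{d\mu}{(1+x^2)^2}$, where $\rho$ is the remainder of $p$ upon division by $(1+x^2)^2$, so that $\deg\rho\le 3$. Indeed, writing $p=(1+x^2)^2Q+\rho$ with $\deg Q\le n-3$, the factor $(1+x^2)^2$ cancels the weight and leaves $\int Q R_n\,d\mu$; since $R_n=P_{n+2}+A_nP_n+B_nP_{n-2}$ by \eqref{eq:Rn} and each $P_k$ is $\mu$-orthogonal to every polynomial of degree $<k$, this integral vanishes as soon as $\deg Q\le n-3$. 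Thus, on polynomials of degree at most $n+1$, the modified functional $p\mapsto\int R_n p\,\frac{d\mu}{(1+x^2)^2}$ depends only on $p\bmod(1+x^2)^2$.

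The crux is the second step, where I would pin down that remainder for $p=R_m$ (legitimate since $\deg R_m=m+2\le n+1$). The remainder $\rho$ is the Hermite interpolant of $R_m$ at the double nodes $\pm i$, determined by $\rho(\pm i)=R_m(\pm i)$ and $\rho'(\pm i)=R_m'(\pm i)$. Here the interpolation conditions \eqref{eq:Rn'(i)=0} are exactly what is needed: $R_m'(i)=0$, and by the parity of $R_m$ from Remark \ref{rmk:even_odd} also $R_m'(-i)=0$, so both derivative data vanish. Setting $\rho=\alpha+\beta x+\gamma x^2+\delta x^3$ and imposing $\rho'(\pm i)=0$ forces $\gamma=0$ and $\beta=3\delta$, whence $\rho=\alpha+\delta(x^3+3x)=\alpha R_0+\delta R_1$. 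In other words, the vanishing derivatives at $\pm i$ are precisely what makes the remainder land in $\mathrm{span}\{R_0,R_1\}$: they kill the $x^2$-term and lock the $x$- and $x^3$-coefficients into the combination defining $R_1$.

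Combining the two steps gives $\int R_n R_m\,\frac{d\mu}{(1+x^2)^2}=\alpha\int R_n R_0\,\frac{d\mu}{(1+x^2)^2}+\delta\int R_n R_1\,\frac{d\mu}{(1+x^2)^2}=0$ by \eqref{eq:integral_R0} and \eqref{eq:integral_R1}, which completes the argument. I expect the second step to be the only delicate one: it is where the otherwise opaque interpolation condition \eqref{eq:Rn'(i)=0} acquires meaning, turning into the statement that $R_m$ is congruent to an element of $\mathrm{span}\{R_0,R_1\}$ modulo $(1+x^2)^2$. The first step is routine bookkeeping with degrees once one observes that multiplying by $(1+x^2)^2$ trades the modified weight for the original $\mu$, against which the three-term structure of $R_n$ in \eqref{eq:Rn} is orthogonal to all low-degree polynomials.
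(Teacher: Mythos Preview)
Your proof is correct and rests on the same underlying mechanism as the paper's: subtract from $R_m$ a low-degree polynomial matching its Hermite data at $\pm i$, so the difference is divisible by $(1+x^2)^2$, trade the modified weight for $d\mu$, and then invoke the $\mu$-orthogonality of $P_{n+2},P_n,P_{n-2}$ against the resulting low-degree quotient.

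Where you differ is in packaging. The paper splits into parity cases: for $m$ even it subtracts the constant $R_m(i)$, and for $m$ odd it manufactures an explicit degree-$4$ polynomial $S_m(x)=x^4+\tfrac{R_m(i)}{2i}x^3+2x^2+\tfrac{3R_m(i)}{2i}x+1$ (whose even part is exactly $(1+x^2)^2$), checking the required values and derivatives by hand and then appealing separately to \eqref{eq:integral_R0} or \eqref{eq:integral_R1}. Your single division-with-remainder step subsumes both cases at once: you observe that the degree-$\le 3$ Hermite interpolant of $R_m$ at $\pm i$ automatically lands in $\mathrm{span}\{R_0,R_1\}$ because the derivative conditions $\rho'(\pm i)=0$ force $\gamma=0$ and $\beta=3\delta$. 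This is cleaner and makes the role of condition \eqref{eq:Rn'(i)=0} more transparent---it is exactly what collapses the four-dimensional remainder space to the two-dimensional span of $R_0$ and $R_1$. The paper's case split recovers the same remainders (indeed $S_m\equiv \tfrac{R_m(i)}{2i}(x^3+3x)\pmod{(1+x^2)^2}$), just less directly.
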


\begin{proof}

First consider the case when $n$ and $m$ are both even and let $G_m(x):=R_m(x)-R_m(i)$. Then $G_m(i)=0$ and $G_m(-i)=R_m(-i)-R_m(i)=0$ since $R_m(x)$ is an even polynomial for $m$ even. Also, $G_m'(i)=G_m'(-i)=0$ by equation \eqref{eq:Rn'(i)=0}. Thus, for $m \geq 2$, we have that $G_m(x)=(1+x^2)^2g_m(x)$ where $g_m(x)$ is a polynomial of degree $m-2$.  Then for $m\geq 2$,
\begin{align*}
    \int_{-\infty}^{\infty}  \frac{G_m(x)R_n(x)}{(1+x^2)^2}d\mu(x) &= \int_{-\infty}^{\infty}  g_m(x)R_n(x)d\mu(x)\\
    &=\int_{-\infty}^{\infty}  g_m(x)P_{n+2}(x)d\mu(x) +A_n\int_{-\infty}^{\infty}  g_m(x)P_{n}(x)d\mu(x)\\
    &+B_n\int_{-\infty}^{\infty}  g_m(x)P_{n-2}(x)d\mu(x)\\
    &=0 \text{ if } m<n
\end{align*} where the last equality holds by the orthogonality of $\{P_n(x)\}$. But,
\begin{align*}
    \int_{-\infty}^{\infty}  \frac{G_m(x)R_n(x)}{(1+x^2)^2}d\mu(x)&= \int_{-\infty}^{\infty}  \frac{R_m(x)R_n(x)}{(1+x^2)^2}d\mu(x)- R_m(i)\int_{-\infty}^{\infty}  \frac{R_n(x)}{(1+x^2)^2}d\mu(x)\\
    &=\int_{-\infty}^{\infty}  \frac{R_m(x)R_n(x)}{(1+x^2)^2}d\mu(x)
\end{align*} by condition \eqref{eq:integral_R0} of the $\{R_n(x)\}$. Hence, we see that 
\[\int_{-\infty}^{\infty}  \frac{R_m(x)R_n(x)}{(1+x^2)^2}d\mu(x)=0\]
for all $m,n$ even such that $2\leq m <n$. If $m=0$ then it follows directly from \eqref{eq:integral_R0} that $R_n(x)$ is orthogonal to $R_0(x)$ for all $n \geq 2$.

Now, let $m,n$ be odd and consider $S_m(x)=x^4+\frac{R_m(i)}{2i}x^3+2x^2+\frac{3R_m(i)}{2i}x+1$. Then it is easy to check that for all $m \geq 1,$ 
\begin{enumerate}
    \item $S_m(i)=R_m(i)$
    \item $S_m(-i)=-R_m(i)=R_m(-i) \text{ (since } R_m(x) \text{ is odd)}$
    \item $S'_m(i)=S'_m(-i)=0$
\end{enumerate}
Let $H_m(x)=R_m(x)-S_m(x)$. Then $H_m(i)=H_m(-i)=H'_m(i)=H'_m(-i)=0$ for all $m \geq 1$, hence $H_m(x)=(1+x^2)^2h_m(x)$ where $h_m(x)$ is a polynomial of degree $m-2$. Then
\begin{equation}\label{eq:HmRn}
\begin{split}
    \int_{-\infty}^{\infty}  \frac{H_m(x)R_{n}(x)}{(1+x^2)^2}d\mu(x)&=\int_{-\infty}^{\infty}  \frac{h_m(x)P_{n+2}(x)}{(1+x^2)^2}d\mu(x)+A_n\int_{-\infty}^{\infty}  \frac{h_m(x)P_{n}(x)}{(1+x^2)^2}d\mu(x)\\
    &+B_n\int_{-\infty}^{\infty} \frac{h_m(x)P_{n-2}(x)}{(1+x^2)^2}d\mu(x)\\
    &=0 \text{ for } m<n.
    \end{split}
\end{equation} Also,
\begin{align*}
    \int_{-\infty}^{\infty}  \frac{H_m(x)R_{n}(x)}{(1+x^2)^2}d\mu(x)&=\int_{-\infty}^{\infty}  \frac{R_m(x)R_{n}(x)}{(1+x^2)^2}d\mu(x)-\int_{-\infty}^{\infty}  \frac{S_m(x)R_{n}(x)}{(1+x^2)^2}d\mu(x)
\end{align*} thus,
\begin{align*}
    \int_{-\infty}^{\infty}  \frac{R_m(x)R_{n}(x)}{(1+x^2)^2}d\mu(x)&=\int_{-\infty}^{\infty}  \frac{H_m(x)R_{n}(x)}{(1+x^2)^2}d\mu(x)+\int_{-\infty}^{\infty}  \frac{S_m(x)R_{n}(x)}{(1+x^2)^2}d\mu(x)\\
    &=\int_{-\infty}^{\infty}  \frac{S_m(x)R_{n}(x)}{(1+x^2)^2}d\mu(x)
\end{align*} for $3\leq m <n$ by \eqref{eq:HmRn}. 
Now, since $n$ is odd, 
\begin{equation}\label{eq:SmRn}
\begin{split}
    \int_{-\infty}^{\infty}  \frac{S_m(x)R_{n}(x)}{(1+x^2)^2}d\mu(x) &=\frac{R_m(i)}{2i}\int_{-\infty}^{\infty}  \frac{x^3R_{n}(x)}{(1+x^2)^2}d\mu(x)\\
    &+\frac{3R_m(i)}{2i}\int_{-\infty}^{\infty}  \frac{xR_{n}(x)}{(1+x^2)^2}d\mu(x)
    \end{split}
\end{equation} by definition of $S_m(x)$ and the fact that $\frac{x^kR_{n}(x)}{(1+x^2)^2}$ is an odd function for any positive, even integer $k$. But by condition \eqref{eq:integral_R1}, we know that 
\[\int_{-\infty}^{\infty}  \frac{R_1(x)R_n(x)}{(1+x^2)^2}d\mu(x)=\int_{-\infty}^{\infty}  \frac{(x^3+3x)R_n(x)}{(1+x^2)^2} d\mu(x)=0  \text{ for all }n \geq 2.
\] Hence, the right-hand side of \eqref{eq:SmRn} is zero for $n \geq 3$. 
Therefore, $\int_{-\infty}^{\infty}  \frac{R_m(x)R_{n}(x)}{(1+x^2)^2}d\mu(x)=0$ for all odd $m,n$ such that $3 \leq m <n$.

For the case where $m$ is even and $n$ is odd, or vice versa, $\int_{-\infty}^{\infty}  \frac{R_m(x)R_{n}(x)}{(1+x^2)^2}d\mu(x) =0$ simply by symmetry hence we have shown that $\int_{-\infty}^{\infty} \frac{R_m(x)R_{n}(x)}{(1+x^2)^2}d\mu(x)=0$ for all $0\leq m<n$.

Finally, notice that if $m=n$ then $\frac{R^2_n(x)}{(1+x^2)^2}$ is positive on the support of $\mu$ hence
\[\int_{-\infty}^{\infty}  \frac{R^2_n(x)}{(1+x^2)^2}d\mu(x) \neq 0
\]
thus $\tcnew{\{R_n(x)\}_{n\geq 0}}$ is a \tcnew{family of polynomials orthogonal} with respect to
\newline
$\frac{d\mu(x)}{(1+x^2)^2}$.
\end{proof}
 
If such a sequence $\{R_n(x)\}$ exists then $R_n(i)\neq 0$ for any $n \geq 0$. To show this, we will first prove the following lemma.

\begin{lemma}\label{lemma:zeros}
Given an arbitrary set $\{x_1,x_2,\dots,x_{n-1}\}$ of distinct real numbers, there is a monic polynomial $f(x)$ such that $\deg f=n+1$,
\[
f(x_j)=0,\quad j=1,2,\dots,n-1,
\]
where the $x_j$ are the only real zeros of $f$, and
\begin{equation}\label{iCond}
    f'(i)=f'(-i)=0.
\end{equation}
\end{lemma}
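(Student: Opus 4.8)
The plan is to build $f$ as a product $f(x)=p(x)q(x)$, where $p(x)=\prod_{j=1}^{n-1}(x-x_j)$ carries exactly the prescribed real zeros and $q$ is a monic quadratic with real coefficients, $q(x)=x^2+bx+c$, chosen to enforce the two derivative conditions at $\pm i$. Since the $x_j$ are distinct, every polynomial vanishing at all of them is divisible by $p$; as $f$ is monic of degree $n+1$ and $\deg p=n-1$, this quadratic shape for $q$ is in fact forced, so the whole problem collapses to choosing the two real numbers $b,c$. (Note one needs at least one point, i.e. $n\ge 2$, for a solution to exist: for $n=1$ the condition $f'(i)=0$ on a real monic quadratic already fails.)

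Next I would reduce $f'(i)=f'(-i)=0$ to a single complex equation. Because $f$ has real coefficients, so does $f'$, so $f'(-i)=\overline{f'(i)}$; hence $f'(i)=0$ alone already forces $f'(-i)=0$. Writing $f'(i)=p'(i)q(i)+p(i)q'(i)=0$ with $q(i)=(c-1)+bi$ and $q'(i)=b+2i$ gives one complex linear equation in $(b,c)$, i.e. two real linear equations. I would solve it using the logarithmic-derivative identity $p'(i)/p(i)=\sum_{j}1/(i-x_j)$ together with $\tfrac{1}{i-x_j}=\tfrac{-x_j-i}{1+x_j^2}$, which yields $p'(i)/p(i)=-S-iT$ with $S=\sum_j \tfrac{x_j}{1+x_j^2}$ and $T=\sum_j \tfrac{1}{1+x_j^2}>0$. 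A short computation then shows the determinant of the $2\times 2$ real system equals $|p(i)|^2(S^2+T^2+T)$, which is strictly positive, so $b,c$ exist and are unique.

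The step I expect to be the real crux is guaranteeing that $q$ contributes no extra real zeros, so that the real zeros of $f$ remain exactly the $x_j$. The clean way to see this is to divide $f'(i)=0$ by $p(i)q(i)$ to get $q'(i)/q(i)=-p'(i)/p(i)=S+iT$, hence $\operatorname{Im}\bigl(q'(i)/q(i)\bigr)=T>0$. (One first checks $q(i)\ne 0$: otherwise $q=x^2+1$ and then $f'(i)=2i\,p(i)\ne 0$, contradicting the chosen condition.) Now if $q$ had real roots $r_1,r_2$, then $q'(i)/q(i)=\sum_k 1/(i-r_k)$ and $\operatorname{Im}\tfrac{1}{i-r_k}=-\tfrac{1}{1+r_k^2}<0$, forcing $\operatorname{Im}\bigl(q'(i)/q(i)\bigr)<0$, a contradiction. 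Thus $q$ has a genuine complex-conjugate pair of roots and no real zeros.

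Finally I would assemble the pieces: $f=pq$ is monic of degree $n+1$, vanishes precisely at the distinct reals $x_1,\dots,x_{n-1}$ (since $q$ never vanishes on the real axis, in particular $q(x_j)\ne 0$), and satisfies $f'(i)=f'(-i)=0$ by construction. The only genuinely computational part is the determinant evaluation in the existence step; the no-extra-real-roots claim, which at first looks like the delicate point, dissolves once it is rephrased as the sign constraint $\operatorname{Im}\bigl(q'(i)/q(i)\bigr)=T>0$.
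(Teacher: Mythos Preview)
Your proof is correct and the existence half is essentially identical to the paper's: both write $f=pq$ with $p(x)=\prod_j(x-x_j)$ and a monic real quadratic $q$, reduce $f'(\pm i)=0$ to a linear system in $(b,c)$, and show the determinant is nonzero via the logarithmic derivative $p'(i)/p(i)=\sum_j 1/(i-x_j)$ and the observation that its imaginary part has a definite sign. Your reduction to a single complex equation (using $f'(-i)=\overline{f'(i)}$) is a cosmetic streamlining of the paper's two-equation system, and your explicit remark that $n\ge 2$ is required is a point the paper leaves implicit.

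The one genuine difference is in ruling out extra real zeros of $q$. The paper argues globally: if $q$ had real roots then $f$ would be real-rooted, so by Rolle $f'$ would have $n$ real zeros, contradicting $f'(\pm i)=0$ with $\deg f'=n$. You instead argue locally, reading off $\operatorname{Im}\bigl(q'(i)/q(i)\bigr)=T>0$ from the equation $q'(i)/q(i)=-p'(i)/p(i)$ and noting that a real-rooted $q$ would force this imaginary part to be negative. Both arguments are short and self-contained; the paper's Rolle argument is a touch more conceptual, while yours has the advantage of reusing exactly the same sign computation that drove the existence step, so nothing new needs to be introduced.
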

\begin{proof}
Evidently, if such a polynomial exists then $f(x)=P(x)Q(x)$, where 
$Q(x)=(x-x_1)(x-x_2)\dots(x-x_{n-1})$ is a real polynomial and $P(x)=x^2+bx+c$. In other words, the condition \eqref{iCond} is equivalent to
\begin{equation}\label{iSystem}
\begin{split}
    P'(i)Q(i)+P(i)Q'(i)=0\\
    P'(-i)Q(-i)+P(-i)Q'(-i)=0.
\end{split}
\end{equation}
Since $P'(i)=2i+b$, $P'(-i)=-2i+b$, $P(i)=c-1+bi$, and $P(-i)=c-1-bi$, \eqref{iSystem} takes the form
\begin{equation}\label{iSystem1}
\begin{split}
    cQ'(i)+b(Q(i)+iQ'(i))=-2iQ(i)+Q'(i)\\
    cQ'(-i)+b(Q(-i)-iQ'(-i))=2iQ(-i)+Q'(-i).
\end{split}
\end{equation}
Hence we can find such $b$ and $c$ if the determinant for \eqref{iSystem1} does not vanish. Let us write the determinant
\[
\begin{split}
\begin{vmatrix}
Q'(i)&Q(i)+iQ'(i)\\
Q'(-i)&Q(-i)-iQ'(-i)
\end{vmatrix}&=|Q(i)|^2
\begin{vmatrix}
\frac{Q'(i)}{Q(i)}&1+i\frac{Q'(i)}{Q(i)}\\
\frac{Q'(-i)}{Q(-i)}&1-i\frac{Q'(-i)}{Q(-i)}
\end{vmatrix}\\
&=|Q(i)|^2\left(\frac{Q'(i)}{Q(i)}-\overline{\left(\frac{Q'(i)}{Q(i)}\right)}-2i\left|\frac{Q'(i)}{Q(i)}\right|^2\right).
\end{split}
\]
Since 
\[
\frac{Q'(x)}{Q(x)}=\sum_{j=1}^{n-1}\frac{1}{x-x_j}, 
\]
we see that $\Im(Q'(i)/Q(i))<0$ and hence the determinant in question is not $0$, which proves the desired result. 

To show that $\{x_1, x_2, \dots x_{n-1}\}$ are the only real zeros of $f(x)$, just note that if $f(x)$ had all real zeros, then by the mean value theorem, $f'(x)$ must have $n$ real zeros. But this is not possible since $f'(x)$ is degree $n$ and $f'(i)=f'(-i)=0$. 
\end{proof}

\begin{proposition}\label{Rni}
For the polynomials $R_n(x)$, provided they exist, we have that $R_n(i)\neq 0$ for any $n= 1, 2, \dots$. 
\end{proposition}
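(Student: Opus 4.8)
My plan is to argue by contradiction: suppose $R_n(i) = 0$ for some $n \geq 1$, and derive a contradiction with the orthogonality established in Theorem \ref{thrm:Rn} by exhibiting a nonzero polynomial of degree $< n$ that is orthogonal to $R_n$ but whose inner product against $R_n$ must in fact be nonzero. The key structural fact I would exploit is the interpolation condition \eqref{eq:Rn'(i)=0}, namely $R_n'(i) = 0$, together with the symmetry that gives $R_n'(-i) = 0$ as well (since $R_n$ has definite parity). If additionally $R_n(i) = 0$, then by symmetry $R_n(-i) = \pm R_n(i) = 0$, so both $\pm i$ are double zeros of $R_n$, meaning $(1+x^2)^2$ divides $R_n(x)$. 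Writing $R_n(x) = (1+x^2)^2 q(x)$ with $\deg q = n - 2$, the weighted integral collapses:
\[
\int_{-\infty}^{\infty} \frac{R_n(x)^2}{(1+x^2)^2}\, d\mu(x) = \int_{-\infty}^{\infty} R_n(x)\, q(x)\, d\mu(x).
\]

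The next step is to evaluate this last integral using orthogonality of the $\{P_n\}$. Since $R_n(x) = P_{n+2}(x) + A_n P_n(x) + B_n P_{n-2}(x)$ and $\deg q = n-2 < n+2$, the polynomial $q$ is a combination of $P_0, \dots, P_{n-2}$, which are all orthogonal (with respect to $d\mu$) to $P_{n+2}$ and $P_n$. Hence the only surviving contribution comes from the $P_{n-2}$ term, giving
\[
\int_{-\infty}^{\infty} R_n(x)\, q(x)\, d\mu(x) = B_n \int_{-\infty}^{\infty} P_{n-2}(x)\, q(x)\, d\mu(x).
\]
Because $q$ is monic of degree exactly $n-2$ (its leading coefficient equals that of $R_n$, which is $1$), this reduces to $B_n$ times the positive norm-squared $\|P_{n-2}\|_{\mu}^2$. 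But the left-hand side $\int R_n^2/(1+x^2)^2\, d\mu$ is strictly positive by the final paragraph of the proof of Theorem \ref{thrm:Rn}. This forces $B_n > 0$; so far this is consistent, and I would need to extract a genuine contradiction.

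The sharper contradiction comes from testing $R_n$ against a lower-degree orthogonal polynomial in the $R$-family. The cleaner route is this: if $(1+x^2)^2 \mid R_n$, then $R_n$ itself, divided by $(1+x^2)^2$, is a polynomial $q$ of degree $n-2$, and I claim $R_n$ is orthogonal (in the $\frac{d\mu}{(1+x^2)^2}$ inner product) to every polynomial of degree $\leq n-1$ by Theorem \ref{thrm:Rn} applied to the lower $R_m$'s spanning those degrees. In particular $R_n \perp q$ in the unweighted $d\mu$ pairing after clearing the denominator, yet the computation above shows $\langle R_n, R_n\rangle_{d\mu/(1+x^2)^2} = B_n \|P_{n-2}\|^2_\mu$; combining this with a parallel evaluation showing the self-inner-product must also equal the leading-order term $\|P_{n+2}\|_\mu^2 > 0$ (obtained by pairing $R_n$ against a representative of degree $n+2$) would pin down the structure. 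The main obstacle I anticipate is precisely organizing which orthogonality relations to invoke so that the vanishing forced by $(1+x^2)^2 \mid R_n$ directly conflicts with the strict positivity of $\int R_n^2/(1+x^2)^2\, d\mu$; I expect Lemma \ref{lemma:zeros} to be the intended device, since it guarantees existence of an auxiliary degree-$(n+1)$ polynomial with prescribed real zeros and satisfying $f'(\pm i) = 0$, which can be used to build a test function matching $R_n$'s interpolation data and thereby isolate the contradiction cleanly.
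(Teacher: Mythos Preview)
Your opening move is right and matches the paper: if $R_n(i)=0$ then, together with $R_n'(\pm i)=0$ and reality/parity, you get $(1+x^2)^2\mid R_n$, so $R_n=(1+x^2)^2 q$ with $q$ monic of degree $n-2$. The computation $\int R_n^2\,\frac{d\mu}{(1+x^2)^2}=\int R_n\,q\,d\mu=B_n\|P_{n-2}\|_\mu^2$ is also correct, but as you note it only gives $B_n>0$, not a contradiction.

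The gap is in your ``sharper'' route. You claim that $R_n$ is orthogonal, in the $\frac{d\mu}{(1+x^2)^2}$ inner product, to every polynomial of degree $\le n-1$ because ``the lower $R_m$'s span those degrees.'' They do not: the family $\{R_m\}_{m\ge 0}$ has degrees $0,3,4,\dots$, so degrees $1$ and $2$ are missing and the span of $R_0,R_1,\dots,R_{n-4}$ has codimension $2$ inside polynomials of degree $\le n-2$. There is no reason for the particular quotient $q=R_n/(1+x^2)^2$ to lie in that span (differentiating $R_n=(1+x^2)^2q$ at $x=\pm i$ gives $0=0$ automatically and says nothing about $q'(\pm i)$). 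So you cannot conclude $\int R_n\,q\,\frac{d\mu}{(1+x^2)^2}=0$, and the contradiction $\int q^2\,d\mu=0$ is not available.

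This is exactly where Lemma~\ref{lemma:zeros} does the work, and the paper uses it as follows. Let $x_1,\dots,x_k$ be the real zeros of $q$ of odd multiplicity; Lemma~\ref{lemma:zeros} produces a monic $f$ of degree $k+2$ whose only real zeros are $x_1,\dots,x_k$ and with $f'(\pm i)=0$. The interpolation condition is the point: expanding $f$ in the basis $\{R_0,x,x^2,R_1,\dots,R_k\}$, the requirement $f'(\pm i)=0$ forces the $x$ and $x^2$ coefficients to vanish, so $f$ is a combination of $R_0,R_1,\dots,R_k$. Since $k\le n-2$, Theorem~\ref{thrm:Rn} gives
\[
\int q(x)\,f(x)\,d\mu(x)=\int R_n(x)\,f(x)\,\frac{d\mu(x)}{(1+x^2)^2}=0,
\]
while by construction $q\,f$ has no sign changes on the support of $\mu$ and is not identically zero, contradicting positivity of $\mu$. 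Your proposal gestures toward Lemma~\ref{lemma:zeros} at the end but never invokes the crucial fact that $f'(\pm i)=0$ is what places $f$ inside $\operatorname{span}\{R_m\}$ and thereby unlocks the orthogonality you need.
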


\begin{proof}
Assume that $R_n(i)=0$ for some $n \geq 1$. Then since $R_n(x)$ is a real polynomial, we must also have that $R_n(-i)=0$. This combined with the fact that $R_n'(i)=R'_n(-i)=0$ from $\eqref{eq:Rn'(i)=0}$, shows that $R_n(x)/(1+x^2)^2$ is a polynomial of degree $n-2$. Let $F_{n-2}(x)=\frac{R_n(x)}{(1+x^2)^2}$  and let $\{x_1, x_2, \dots, x_k\}$, be the distinct, real roots of odd degree of $F_{n-2}(x), \quad k \leq n-2$. From Lemma \ref{lemma:zeros}, we know that there exists a polynomial $f$ of degree $k+2$ such that its only real roots are $x_j$ for $j=1,2, \dots, k$ and $f'(i)=f'(-i)=0$. Since $\{R_0(x), x, x^2, R_1(x), R_2(x), \dots, R_{k}(x)\}$ forms a basis for polynomials of degree at most $k+2$, we can write
\[f(x) = c_0R_0(x)+c_1x+c_2x^2+\sum_{j=3}^{k+2} c_jR_{j-2}(x).
\] By the condition $f'(i)=f'(-i)=0$, it must be the case that $c_1=c_2=0$. 
By the orthogonality of $\tcnew{\{R_n(x)\}_{n\geq 0}}$, we have  
\begin{align*}
    \int_{-\infty}^{\infty}  F_{n-2}(x)f(x) d\mu(x) &=\sum_{j=0, j \neq 1,2}^{k+2} c_j \int_{-\infty}^{\infty}  F_{n-2}(x)R_{j-2}(x) d\mu(x)\\
    &=\sum_{j=0, j \neq 1,2}^{k+2} c_j \int_{-\infty}^{\infty}  R_n(x)R_{j-2}(x)\frac{d\mu(x)}{(1+x^2)^2}\\
    &=0
\end{align*} However, $F_{n-2}(x)f(x)$ is a polynomial of degree $n-2+k$ where all the real roots have even multiplicity. Thus $\int_{-\infty}^{\infty} F_{n-2}(x)f(x) d\mu(x) =0$ implies that  $F_{n-2}(x)f(x)\equiv 0$ which is a contradiction. Thus, $R_n(i)\neq 0$ for all $n \geq 1$.
\end{proof}

\section{The relation to biorthogonal polynomials}

A generalization of the classical concept of orthogonality that is also relevant to our considerations was introduced by Iserles and N\o rsett \cite{IN88}. Later it was even further generalized by Brezinski \cite{B89} (see also \cite{B92}). For convenience of the reader let us recall the Brezinski setting: given a sequence of linear functionals $c^{(m)}$ defined on polynomials, find a family of monic polynomials $P_k$ such that:
\begin{enumerate}
    \item[$\bullet$] $P_k$ has the exact degree $k$;
    \item[$\bullet$] $c^{(j)}(P_k)=0$ for $j=0, 1,\dots, k-1$.
\end{enumerate}
Following Iserles and N\o rsett, we will call such polynomials $P_k$ biorthogonal provided they exist for all $k$'s. It is not so difficult to see that such polynomials can be found by the formula
\begin{equation}\label{biorthogonalpolynomials}
 P_k(x)=\frac{1}{\Delta_k}
\begin{vmatrix}
    1&x&\dots&x^k\\
    c_0^{(0)}&c_1^{(0)}&\dots&c_k^{(0)}\\
    &\hdotsfor{2}&\\
    c_0^{(k-1)}&c_1^{(k-1)}&\dots&c_k^{(k-1)}
\end{vmatrix}, \quad c_n^{(j)}=c^{(j)}(x^n)   
\end{equation}
provided that 
\begin{equation}\label{NonDegenCond}
\Delta_k=\det(c_l^{(m)})_{l,m=0}^{k-1}\ne 0.     
\end{equation}
If $\Delta_k\ne 0$ it is said that a family of the functionals $c^{(m)}$ is regular. It is worth mentioning that if for a given functional $c=c^{(0)}$ we set 
\[
c^{(m)}(p(x))=c^{(0)}(x^mp(x)),
\]
the above-described biorthogonality becomes the conventional orthogonality  with respect to the functional $c$. 

To show how this concept appears in the context of the exceptional polynomials in question, let us consider the following two functionals:
\begin{equation}\label{InitialFunct}
  c^{(0)}(p(x))=p'(i), \quad c^{(1)}(p(x))=p'(-i),  
\end{equation}
which are naturally related to \eqref{eq:Rn'(i)=0} and the fact that the polynomials $R_n$ are real. Unfortunately, we immediately see that
\[
\Delta_1=|0|=0, \quad \Delta_2=\begin{vmatrix}
    0&1\\
    0&1
\end{vmatrix}=0,
\]
which means that any family of functionals that starts with $c^{(0)}$ and $c^{(1)}$ is not regular. It also implies that we cannot construct polynomials of degrees 1 and 2 that will be biorthogonal but it is exactly what we expect when generating $R_n$. To define the rest of the family to produce $R_n$, let us introduce the functions
\begin{equation}\label{psi}
\psi_0(x)=1, \quad \psi_k(x)=x^{k+2}+\frac{k+2}{k}x^k, \quad k=1, 2,\dots.
\end{equation}
Now we are in the position to define the functionals:
\begin{equation}\label{TheRestOfFunct}
c^{(k+1)}(p(x))=\int_{-\infty}^{\infty} p(x)\psi_{k-1}(x)\frac{d\mu(x)}{(1+x^2)^2}, \quad k=1,2,\dots    
\end{equation}
and for which the following statement holds true.
\begin{theorem}
    The family of polynomials $R_n$ is {\it almost} biorthogonal with respect to the family of the functionals defined by \eqref{InitialFunct} and \eqref{TheRestOfFunct}, that is,
\[
c^{(j)}(R_k)=0, \quad j=0,\dots, k+1.
\]
Here, the term ``almost" indicates that we have to skip two degrees.
\end{theorem}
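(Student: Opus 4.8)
The plan is to handle the two derivative functionals $c^{(0)},c^{(1)}$ separately from the integral functionals $c^{(2)},\dots,c^{(k+1)}$, and to reduce the latter to the orthogonality already established in Theorem \ref{thrm:Rn}. For $j=0$ the assertion $c^{(0)}(R_k)=R_k'(i)=0$ is exactly \eqref{eq:Rn'(i)=0}, and since $R_k$ is a real polynomial we have $c^{(1)}(R_k)=R_k'(-i)=\overline{R_k'(i)}=0$; so the two ``missing-degree'' functionals cost nothing.

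The heart of the argument is the integral functionals. Unwinding the indexing in \eqref{TheRestOfFunct}, the condition $c^{(j)}(R_k)=0$ for $j=2,\dots,k+1$ is the assertion that
\[
\int_{-\infty}^{\infty} R_k(x)\psi_l(x)\frac{d\mu(x)}{(1+x^2)^2}=0,\qquad l=0,1,\dots,k-1.
\]
The key observation, and the reason the functions $\psi_l$ in \eqref{psi} are defined the way they are, is that they satisfy the same interpolation condition at $\pm i$ as the $R_n$. Indeed, differentiating \eqref{psi} gives $\psi_l'(x)=(l+2)x^{l-1}(1+x^2)$ for $l\ge 1$ (and $\psi_0'\equiv 0$), so that $\psi_l'(i)=\psi_l'(-i)=0$. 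Thus both $\psi_l$ and every $R_n$ live in the subspace of polynomials whose derivative vanishes at $\pm i$.

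From here I would prove the central claim that $\psi_l\in\operatorname{span}\{R_0,R_1,\dots,R_l\}$ for each $l\ge 0$. Since $\psi_l$ has degree $l+2$, I expand it in the basis $\{R_0,x,x^2,R_1,\dots,R_l\}$ of polynomials of degree at most $l+2$ used in the proof of Proposition \ref{Rni}, writing $\psi_l=c_0R_0+c_1x+c_2x^2+\sum_{j=1}^{l} d_jR_j$. Differentiating and evaluating at $\pm i$, and using $R_0'\equiv 0$, $R_j'(\pm i)=0$ from \eqref{eq:Rn'(i)=0}, and $\psi_l'(\pm i)=0$, the only surviving terms give $c_1+2c_2i=0$ and $c_1-2c_2i=0$, whence $c_1=c_2=0$. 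Therefore $\psi_l$ is a linear combination of $R_0,\dots,R_l$ alone, consistent with the degenerate low cases $\psi_0=R_0$ and $\psi_1=R_1$.

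Finally, for $l\le k-1$ each $R_j$ appearing in this expansion has index $j\le l<k$, so by the orthogonality of the family $\{R_n\}$ with respect to $d\mu/(1+x^2)^2$ (Theorem \ref{thrm:Rn}) every term $\int R_kR_j\,d\mu/(1+x^2)^2$ vanishes, which yields the displayed integral identity and completes the argument. I expect the main obstacle to be the span claim: the computation $\psi_l'(\pm i)=0$ is precisely what makes the linear-algebra reduction work, and one must check that the derivative conditions really do eliminate exactly the two forbidden monomials $x$ and $x^2$, i.e. the two skipped degrees, which is the structural origin of the word \emph{almost} in the statement.
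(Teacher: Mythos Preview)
Your proof is correct and follows essentially the same route as the paper's. The only cosmetic difference is the direction of the span argument: the paper expands $R_k$ in the $\psi_n$'s (via $R_k'(x)=(1+x^2)\cdot(\text{degree }k-1)$ and $\psi_n'(x)=(n+2)(1+x^2)x^{n-1}$, then integrates), whereas you expand $\psi_l$ directly in the basis $\{R_0,x,x^2,R_1,\dots,R_l\}$ and kill the $x,x^2$ coefficients with the derivative conditions; either way one obtains $\operatorname{span}\{\psi_0,\dots,\psi_l\}=\operatorname{span}\{R_0,\dots,R_l\}$ and concludes by Theorem~\ref{thrm:Rn}.
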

\begin{remark}
Recall that $\deg R_k=k+2$ and based on the theory of biorthogonal polynomials, the reason we have to skip the polynomials is because the family of the functionals is not regular, which is the same situation that happens for indefinite orthogonal polynomials \cite{KL79}, where the term ``almost orthogonal" was used and for Pad\'e approximation \cite{BG} (see also \cite{DD07} where the interplay between indefinite orthogonal polynomials and the Pad\'e table is given). 
\end{remark}
\begin{proof}
 The proof is immediate from Theorem \ref{thrm:Rn} and the representation
 \[
 R_k(x)=\sum_{n=0}^k\alpha_n\psi_n(x)
 \]
that holds for some coefficients $\alpha_n$, which in turn follows from the fact that $R_k'(x)=(1+x^2)\times \text{polynomial of degree k-1}$ and $\psi_n'(x)=(n+2)(1+x^2)x^{n-1}$.
\end{proof}

\begin{corollary}
    The polynomials $R_k$ can be computed by \eqref{biorthogonalpolynomials} using the moments $c_n^{(j)}$ of the linear functionals defined by \eqref{InitialFunct} and \eqref{TheRestOfFunct}.
\end{corollary}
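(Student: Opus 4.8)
The plan is to recognize each $R_k$ as the monic biorthogonal polynomial of degree $k+2$ attached to the functionals $c^{(0)},\dots,c^{(k+1)}$, so that \eqref{biorthogonalpolynomials}, with the index $k$ there replaced by $k+2$, produces it. By the preceding theorem we already know that $R_k$ is monic of degree $k+2$ and that $c^{(j)}(R_k)=0$ for $j=0,\dots,k+1$; these are exactly the $k+2$ linear conditions that single out the degree-$(k+2)$ biorthogonal polynomial. According to the general discussion around \eqref{biorthogonalpolynomials}, that polynomial is given by the stated determinantal formula and is unique as soon as $\Delta_{k+2}=\det\big(c_l^{(m)}\big)_{l,m=0}^{k+1}\neq 0$. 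Hence everything reduces to verifying this single nonvanishing, after which uniqueness forces the formula to return $R_k$.

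Establishing $\Delta_{k+2}\neq 0$ is the step I expect to be the real obstacle, since $\Delta_1=\Delta_2=0$ shows the family is genuinely non-regular and one cannot simply quote a classical Hankel-type positivity. I would argue it by showing that the associated homogeneous system has only the trivial solution: suppose $q$ is a real polynomial with $\deg q\le k+1$ and $c^{(j)}(q)=0$ for all $j=0,\dots,k+1$, and aim to conclude $q\equiv 0$. The first two conditions read $q'(i)=q'(-i)=0$, which, because $q$ is real, is equivalent to $(1+x^2)\mid q'(x)$. Using the identity $\psi_m'(x)=(m+2)(1+x^2)x^{m-1}$ recorded above, the real polynomials of degree at most $k+1$ whose derivative is divisible by $1+x^2$ are precisely the $k$-dimensional span of $\psi_0,\dots,\psi_{k-1}$; thus $q\in\operatorname{span}\{\psi_0,\dots,\psi_{k-1}\}$.

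On the other hand, by the definition \eqref{TheRestOfFunct} the remaining conditions $c^{(j)}(q)=0$ for $j=2,\dots,k+1$ say exactly that $q$ is orthogonal, with respect to $d\mu(x)/(1+x^2)^2$, to each of $\psi_0,\dots,\psi_{k-1}$. Combining the two observations, $q$ is orthogonal to itself, so $\int_{-\infty}^{\infty} q(x)^2\,\frac{d\mu(x)}{(1+x^2)^2}=0$; since the integrand is nonnegative and $\mu$ is supported on an infinite set, this forces $q\equiv 0$. Therefore the homogeneous system is trivial, $\Delta_{k+2}\neq 0$, and the corollary follows from \eqref{biorthogonalpolynomials} as described in the first paragraph. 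The case $k=0$ needs no computation, since $R_0=1$ is already the monic biorthogonal polynomial of degree $0$.

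The only genuinely delicate bookkeeping is the two-step index shift together with the skipped degrees $1$ and $2$; the positivity argument itself is clean precisely because the two derivative conditions drop $q$ into the very subspace against which the integral conditions demand orthogonality. It is worth noting that the same self-orthogonality computation simultaneously re-proves the uniqueness of $R_k$, so the nonvanishing of $\Delta_{k+2}$ and the well-posedness of the construction are really one and the same fact.
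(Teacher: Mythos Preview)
Your argument is correct and more explicit than the paper's, which simply observes that the existence of the monic $R_k$ of degree $k+2$ satisfying $c^{(j)}(R_k)=0$ for $j=0,\dots,k+1$ already forces $\Delta_{k+2}\neq 0$ by the standard Cramer's-rule argument, citing the proof of \cite[Theorem~1]{IN88}. In other words, the paper leverages the already-established existence of $R_k$, whereas you prove $\Delta_{k+2}\neq 0$ directly from the structure of the functionals and the positivity of $d\mu/(1+x^2)^2$, without using $R_k$ at all. Your route is self-contained and, as you note, simultaneously yields uniqueness; the paper's route is shorter because it recycles what is already on the table.

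One small technical point: to conclude $\Delta_{k+2}\neq 0$ you must rule out \emph{complex} solutions $q$ of the homogeneous system, not just real ones. Your self-orthogonality step $\int q^2\,d\mu/(1+x^2)^2=0\Rightarrow q\equiv 0$ genuinely needs $q$ real, while the step $q'(i)=q'(-i)=0\Rightarrow (1+x^2)\mid q'$ does not, so the phrase ``because $q$ is real'' is attached to the wrong place. The fix is one line: since $c^{(1)}(p)=\overline{c^{(0)}(\bar p)}$ and $c^{(j)}$ is a real functional for $j\ge 2$, the real and imaginary parts of any complex solution are again solutions, and it therefore suffices to treat real $q$.
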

\begin{proof}
    Using the standard \tcnew{C}ramer's rule argument, we get that the existence of the polynomial $R_k$ of degree $k+2$ implies that the corresponding determinant $\Delta_{k+2}\ne 0$ (e.g. see the proof of \cite[Theorem 1]{IN88}).  
\end{proof}
At this point we can reformulate Proposition \ref{existence} in a form that is more transparent and typical for orthogonal polynomials.
\begin{corollary}
 The polynomials $R_k$ defined as in equation \eqref{eq:Rn} which satisfy  \eqref{eq:Rn'(i)=0}, \eqref{eq:integral_R0} and \eqref{eq:integral_R1} exist if and only if
 \[
 \Delta_k\ne 0, \quad k=4,5,6,\dots,
 \]
 where $\Delta_k$ is defined by \eqref{NonDegenCond} with the functionals given by \eqref{InitialFunct} and \eqref{TheRestOfFunct}.
\end{corollary}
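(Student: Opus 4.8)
The plan is to read this corollary as the translation of Proposition~\ref{existence} into the biorthogonal language of this section and to prove the two implications separately. The forward direction is already essentially recorded: the preceding corollary shows that the existence of $R_n$, which has exact degree $n+2$, forces $\Delta_{n+2}\neq0$; letting $n$ run over $2,3,4,\dots$ gives $\Delta_k\neq0$ for every $k\geq4$. The constant $R_0$ and $R_1=x^3+3x$ exist unconditionally and correspond to $\Delta_0$ (an empty determinant, equal to $1$) and to $\Delta_3$; a direct computation with \eqref{InitialFunct} and \eqref{TheRestOfFunct} gives that $\Delta_3$ is a nonzero multiple of $\int_{-\infty}^{\infty}d\mu(x)/(1+x^2)^2$, which is why these two degrees never appear as constraints and the list begins at $k=4$.

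For the converse I would fix $n\geq2$, assume $\Delta_{n+2}\neq0$, and build $R_n$ from the determinant \eqref{biorthogonalpolynomials}. That formula produces a monic $p$ of exact degree $n+2$ with $c^{(j)}(p)=0$ for $j=0,\dots,n+1$. Unwinding \eqref{InitialFunct} and \eqref{TheRestOfFunct}, the cases $j=0,1$ say exactly $p'(i)=p'(-i)=0$, while $j=2,\dots,n+1$ say that $p$ is orthogonal, with respect to $d\mu/(1+x^2)^2$, to $\psi_0,\dots,\psi_{n-1}$. It then remains to verify that $p$ has the three-term shape \eqref{eq:Rn} and meets \eqref{eq:Rn'(i)=0}--\eqref{eq:integral_R1}; the derivative condition and, via $\psi_0=R_0$ and $\psi_1=R_1$, conditions \eqref{eq:integral_R0}--\eqref{eq:integral_R1} are immediate from the displayed vanishings, so all the content is in the shape.

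The shape I would obtain in two steps. First, parity: set $\tilde p(x)=(-1)^np(-x)$. Since $\mu$ is symmetric and the reflection $x\mapsto-x$ permutes the pair of conditions $c^{(0)}(p)=c^{(1)}(p)=0$ among themselves while sending each $c^{(m)}(p)=0$ $(m\geq2)$ to a scalar multiple of itself (because $\psi_{m-2}$ has the parity of $m$), the polynomial $\tilde p$ again satisfies all $n+2$ conditions; as $\Delta_{n+2}\neq0$ makes the biorthogonal polynomial unique, $\tilde p=p$, so $p$ has the parity of $n$. Second, localization at $\pm i$: for every $k\leq n-3$ the polynomial $(1+x^2)^2P_k(x)$ has degree $\leq n+1$ and a double zero at each of $\pm i$, hence lies in $\mathrm{span}\{\psi_0,\dots,\psi_{n-1}\}$, so that
\[
\int_{-\infty}^{\infty}p(x)P_k(x)\,d\mu(x)=\int_{-\infty}^{\infty}p(x)(1+x^2)^2P_k(x)\,\frac{d\mu(x)}{(1+x^2)^2}=0 .
\]
Expanding $p=\sum_{j=0}^{n+2}\gamma_jP_j$ in the $\mu$-orthogonal basis $\{P_j\}$, this annihilates every coefficient with $j\leq n-3$, and the parity just established removes $\gamma_{n+1}$ and $\gamma_{n-1}$; what survives is $p=P_{n+2}+\gamma_nP_n+\gamma_{n-2}P_{n-2}$, precisely the form \eqref{eq:Rn}. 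Hence $p=R_n$ exists, and running $n$ over $2,3,4,\dots$ together with the unconditional $R_0,R_1$ yields the whole family.

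The main obstacle is exactly this last step: upgrading the abstract biorthogonal polynomial delivered by \eqref{biorthogonalpolynomials} to one of the constrained form \eqref{eq:Rn}. The two ingredients that make it work are the symmetry/uniqueness argument for parity and the observation that multiplying a low-degree $P_k$ by the double factor $(1+x^2)^2$ lands it inside $\mathrm{span}\{\psi_0,\dots,\psi_{n-1}\}$; once these are in place, the orthogonality of Theorem~\ref{thrm:Rn} and that of $\{P_n\}$ handle the bookkeeping. I would close with a remark that, since $d\mu/(1+x^2)^2$ is a positive measure and the $\psi_j$ are linearly independent, the pertinent Gram determinant---and therefore $\Delta_{n+2}$ itself---is automatically nonzero whenever $\mu$ has infinite support, so in the cases treated here the criterion is always met; the worth of the statement is the clean algebraic reformulation of existence that it supplies.
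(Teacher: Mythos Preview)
Your proof is correct. In fact, the paper offers no proof at all for this corollary: it simply announces it as a ``reformulation'' of Proposition~\ref{existence} and moves on. The forward direction is, as you note, contained in the preceding corollary. The converse, however, is not automatic: the determinant formula \eqref{biorthogonalpolynomials} only yields a monic polynomial of degree $n+2$ annihilated by $c^{(0)},\dots,c^{(n+1)}$, and one must still show it has the constrained shape \eqref{eq:Rn}. Your two-step argument (parity via uniqueness, then the observation that $(1+x^2)^2P_k$ lies in $\mathrm{span}\{\psi_0,\dots,\psi_{n-1}\}$ for $k\le n-3$, which kills the low coefficients in the $P$-expansion) supplies exactly what the paper leaves implicit.

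Your closing remark is also correct and worth highlighting: since the $\psi_j$ span the subspace $\{f:\deg f\le n+1,\ f'(\pm i)=0\}$, testing a purported linear relation $\sum\alpha_jc^{(j)}=0$ on the $\psi_k$ kills the integral part by positivity of $d\mu/(1+x^2)^2$, and the remaining derivative functionals are independent on $x,x^2$. Hence $\Delta_{n+2}\ne0$ whenever $\mu$ has infinite support, so the existence criterion is automatically met in all cases the paper considers. This makes the explicit verification carried out for the Chebyshev case (Theorem~\ref{Thm:ExCheb}) logically unnecessary, though that computation still has value in producing explicit expressions.
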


\begin{remark}
Note that we should have actually started with $\Delta_3$ because $\deg R_1=3$. However, we see that 
\[
\Delta_3=  \begin{vmatrix}
    0&1&2i\\
    0&1&-2i\\
    1&0&\mu_2
\end{vmatrix}=4i\ne0
\]
regardless of the value 
\[
\mu_2=\int_{-\infty}^{\infty} x^2\frac{d\mu(x)}{(1+x^2)^2}. 
\]
As a result, for the family of functionals under consideration, from \eqref{biorthogonalpolynomials} for the polynomial of degree 3 we have that 
\[
R_1(x)=\frac{1}{\Delta_3}\begin{vmatrix}
    1&x&x^2&x^3\\
    0&1&2i&-3\\
    0&1&-2i&-3\\
    1&0&\mu_2&0
\end{vmatrix}=x^3+3x
\]
\end{remark}
To conclude this section, note that there is an analog of \eqref{InitialFunct} for any family of exceptional Hermite polynomials and thus the results of this section can be adapted to the general case and thus exceptional Hermite polynomials are a subclass/limiting case of a larger class of biorthogonal polynomials that has various applications and generic properties.

\begin{section}{Modification of the Christoffel Formula}
By definition, given a family of orthogonal polynomials $\tcnew{\{P_n(x)\}_{n\geq 0}}$, we can obtain the family of exceptional orthogonal polynomials $\tcnew{\{R_n(x)\}_{n\geq 0}}$. We aim to reverse the process and obtain the polynomials $P_n(x)$ from the polynomials $R_n(x)$. 

Since by Theorem \ref{thrm:Rn}, the polynomials $R_n(x)$ are orthogonal with respect to $\frac{d\mu(x)}{(1+x^2)^2}$, one would expect to get the monic polynomials $P_n(x)$, under the classical Christoffel transformation of $R_n(x)$. Thus, applying  \cite[Theorem 2.7.1]{I09} and letting $\phi(x)=(1+x^2)^2= (x-i)^2(x+i)^2$, one hopes that  $P_n(x)$ can be defined by
\begin{equation}\label{eq:christoffel}
    C_{n}\phi(x)P_n(x) = \begin{vmatrix}
    R_n(i)&R_{n+1}(i)&R_{n+2}(i)&R_{n+3}(i)&R_{n+4}(i)\\
     R'_n(i)&R'_{n+1}(i)&R'_{n+2}(i)&R'_{n+3}(i)&R'_{n+4}(i)\\
     R_n(-i)&R_{n+1}(-i)&R_{n+2}(-i)&R_{n+3}(-i)&R_{n+4}(-i)\\
     R'_n(-i)&R'_{n+1}(-i)&R'_{n+2}(-i)&R'_{n+3}(-i)&R'_{n+4}(-i)\\
     R_n(x)&R_{n+1}(x)&R_{n+2}(x)&R_{n+3}(x)&R_{n+4}(x)\\
     \end{vmatrix}
\end{equation} where 
\begin{equation*}
    C_n =\begin{vmatrix}
    R_n(i) & R_{n+1}(i)&R_{n+2}(i)&R_{n+3}(i)\\
    R'_n(i) & R'_{n+1}(i)&R'_{n+2}(i)&R'_{n+3}(i)\\
    R_n(-i) & R_{n+1}(-i)&R_{n+2}(-i)&R_{n+3}(-i)\\
    R'_n(-i) & R'_{n+1}(-i)&R'_{n+2}(-i)&R'_{n+3}(-i)
    \end{vmatrix}.
\end{equation*}
However, by \eqref{eq:Rn'(i)=0}, we have that $R'_n(i)=R'_n(-i)=0 \text{ for all }n=0,1,2, \dots$, therefore the determinant on the right-hand side of \eqref{eq:christoffel} vanishes as well as $C_n=0$ and
we cannot make any conclusions about the polynomials in \eqref{eq:christoffel}.
To resolve this issue, we instead define a sequence of polynomials $\tcnew{\{S_n(x)\}_{n\geq 0}}$ as follows: 
\begin{equation}\label{def:sn}S_n(x):= \frac{1}{c_n\phi(x)}\begin{vmatrix}R_n(i)&R_{n+1}(i)&R_{n+2}(i)\\
R_n(-i)&R_{n+1}(-i)&R_{n+2}(-i)\\
R_n(x)&R_{n+1}(x)&R_{n+2}(x)
\end{vmatrix}
\end{equation} where 
\begin{equation}\label{def:cn}c_n=\begin{vmatrix}
R_n(i)&R_{n+1}(i)\\
R_n(-i) & R_{n+1}(-i)
\end{vmatrix}.
\end{equation}
Note that $c_n\neq 0$ for any $n \geq 0$ since \[R_n(i)R_{n+1}(-i)-R_{n+1}(i)R_n(-i) = \pm 2 R_n(i)R_{n+1}(i)\] and by Proposition \ref{Rni}, $R_n(i)\neq 0$ for any $n \geq 0$.

In general we have the following:

\begin{proposition}  $S_n(x)$ is a real, monic polynomial of degree $n$ and \[ S_n(-x)=(-1)^nS_n(x).\]
\end{proposition}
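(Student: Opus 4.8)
The plan is to work with the numerator determinant
\[
D_n(x):=\begin{vmatrix}R_n(i)&R_{n+1}(i)&R_{n+2}(i)\\ R_n(-i)&R_{n+1}(-i)&R_{n+2}(-i)\\ R_n(x)&R_{n+1}(x)&R_{n+2}(x)\end{vmatrix},
\]
so that $S_n(x)=D_n(x)/(c_n\phi(x))$, and to establish the four assertions in turn. First I would show that $\phi(x)=(x-i)^2(x+i)^2$ divides $D_n(x)$, which is exactly the step that rescues the construction after the classical Christoffel determinant collapses. Setting $x=i$ makes the third row coincide with the first, and $x=-i$ makes it coincide with the second, so $D_n(i)=D_n(-i)=0$. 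Since only the bottom row depends on $x$, the derivative $D_n'(x)$ is the same determinant with the last row replaced by $(R_n'(x),R_{n+1}'(x),R_{n+2}'(x))$; evaluating at $\pm i$ and invoking \eqref{eq:Rn'(i)=0} (together with $R_0'\equiv 0$) makes that row vanish, giving $D_n'(i)=D_n'(-i)=0$. Hence $\pm i$ are each at least double roots, so $(x-i)^2(x+i)^2\mid D_n(x)$ and $S_n$ is a genuine polynomial.

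Next, for the degree and the monic normalization, I would expand $D_n(x)$ along its third row. The cofactor of $R_{n+2}(x)$ is precisely the $2\times 2$ minor $c_n$ from \eqref{def:cn}, while the cofactors of $R_n(x)$ and $R_{n+1}(x)$ multiply polynomials of degrees $n+2$ and $n+3$. Since $\deg R_{n+2}=n+4$ and $R_{n+2}$ is monic, $D_n$ has degree $n+4$ with leading coefficient $c_n$, and no cancellation occurs at top order. Dividing by the monic degree-$4$ polynomial $\phi$ then yields $\deg S_n=n$ with leading coefficient $c_n/c_n=1$; here $c_n\neq 0$ by Proposition \ref{Rni}.

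For reality I would use a conjugation symmetry rather than compute coefficients. For real $x$ all $R_k(x)$ are real and $\overline{R_k(i)}=R_k(-i)$, $\overline{R_k(-i)}=R_k(i)$. Conjugating $D_n(x)$ therefore interchanges its first two rows, giving $\overline{D_n(x)}=-D_n(x)$, and the same swap gives $\overline{c_n}=-c_n$; as $\phi(x)$ is real, the two signs cancel and $\overline{S_n(x)}=S_n(x)$ for real $x$, so $S_n$ has real coefficients.

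The symmetry $S_n(-x)=(-1)^nS_n(x)$ is where the real bookkeeping lives, and I expect it to be the main obstacle. By Remark \ref{rmk:even_odd} each $R_k$ has the parity of $k$, so $R_k(-x)=(-1)^kR_k(x)$ and, in particular, $R_k(-i)=(-1)^kR_k(i)$. In $D_n(-x)$ the bottom row becomes $\bigl((-1)^nR_n(x),(-1)^{n+1}R_{n+1}(x),(-1)^{n+2}R_{n+2}(x)\bigr)$; scaling the three columns by $(-1)^n,(-1)^{n+1},(-1)^{n+2}$ restores that row to $(R_n(x),R_{n+1}(x),R_{n+2}(x))$ and, using $R_k(-i)=(-1)^kR_k(i)$, simultaneously swaps the top two rows. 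Tracking the overall column factor $(-1)^{n}(-1)^{n+1}(-1)^{n+2}=(-1)^{n+1}$ against the row-swap sign then yields $D_n(-x)=(-1)^nD_n(x)$. Since $\phi$ is even, this gives $S_n(-x)=(-1)^nS_n(x)$ at once, consistent with $S_n$ being monic of degree $n$. The only delicate point throughout is keeping the parity signs in the columns aligned with the interchange of the $\pm i$ rows, which is precisely the step to carry out with care.
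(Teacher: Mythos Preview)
Your proof is correct. The divisibility-by-$\phi$ and degree/monic steps match the paper's argument verbatim. For reality and parity, however, you take a different path: you argue directly on the $3\times3$ determinant via conjugation (swapping the $\pm i$ rows gives $\overline{D_n(x)}=-D_n(x)$ and $\overline{c_n}=-c_n$) and via column scaling combined with a row swap to obtain $D_n(-x)=(-1)^nD_n(x)$. The paper instead observes that the cofactor $b_n=R_n(i)R_{n+2}(-i)-R_n(-i)R_{n+2}(i)$ vanishes by parity, which collapses the expansion to the explicit two-term formula $\phi(x)S_n(x)=\frac{a_n}{c_n}R_n(x)+R_{n+2}(x)$; reality and parity then follow immediately from the corresponding properties of $R_n$ and $R_{n+2}$. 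Your determinant symmetry argument is clean and avoids computing $b_n$, but the paper's route has the advantage of producing the relation \eqref{eq:sn}, which is used repeatedly afterward (e.g.\ in Theorems \ref{thm:Sn_Rn} and \ref{thm:Rho_n}).
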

\begin{proof} Let \[D_n(x)=\begin{vmatrix}R_n(i)&R_{n+1}(i)&R_{n+2}(i)\\
R_n(-i)&R_{n+1}(-i)&R_{n+2}(-i)\\
R_n(x)&R_{n+1}(x)&R_{n+2}(x)
\end{vmatrix}.\]  Then $D_n(x)$ has a zero at $i$ and $-i$ since a row will be repeated. Also, $D_n'(x)=a_nR_n'(x)+b_nR'_{n+1}(x)+c_nR'_{n+2}(x)$ for complex numbers $a_n, b_n, c_n$. Since $R'_n(x)$ has zeros at $i$ and $-i$ for all $n \geq 1$, $D_n(x)$ must have zeros of multiplicity $k \geq 2$ at $i$ and $-i$, therefore, $D_n(x)/\phi(x)$ is a polynomial of degree $\leq n$. Since the leading coefficient of $D_n(x)$ is $c_n$ from \eqref{def:cn}, which is nonzero, $D_n(x)/\phi(x)$ has degree n. \\
\indent Also notice that $b_n=R_n(i)R_{n+2}(-i)-R_{n}(-i)R_{n+2}(i)=0$ for all $n\geq 0$ since
if $n$ is even, then $R_n(x)$ is an even function so $R_n(i)=R_n(-i)$ and thus $R_n(i)R_{n+2}(-i)=R_n(-i)R_{n+2}(i)$. Similarly, if $n$ is odd, $R_n(x)$ is an odd function, so $R_n(i)=-R_n(-i)$. Therefore, $
R_n(i)R_{n+2}(-i)=(-R_n(-i))(-R_{n+2}(i))=R_n(-i)R_{n+2}(i)$. In both cases we see that $b_n=0$ for all $n \geq 0$. Hence, the assertion that $S_n(-x)=(-1)^nS_n(x)$ follows simply from the fact that \begin{equation}\label{eq:sn}\phi(x)S_n(x)=\frac{a_n}{c_n}R_n(x)+R_{n+2}(x) 
\end{equation} and $R_n(-x)=(-1)^nR_n(x)$. Lastly, equation \eqref{eq:sn} shows that $\overline{S_n(x)}=S_n(x)$ since the $R_n(x)'s$ are real polynomials, thus $S_n(x)$ must have real coefficients for all $n=0,1,2\dots.$ 
\end{proof}
It is natural to ask about orthogonality relations regarding the $S_n(x)$ and in fact, we have that they are ``almost" bi-orthogonal to the polynomials $R_n(x)$.
\begin{theorem}\label{thm:Sn_Rn} 
For the sequences $\tcnew{\{S_n(x)\}_{n\geq 0}}$ and $\tcnew{\{R_n(x)\}_{n\geq 0}}$ we have that
\[
\int_{-\infty}^\infty S_n(x)R_m(x) d\mu(x)=0 \text{ for } m=0,1,\dots, n-1, n+1, n+3, n+4,\dots \] and 
\[
\int_{-\infty}^\infty S_n(x)R_m(x) d\mu(x) \neq 0 \text{ for } m=n, n+2 \]
\end{theorem}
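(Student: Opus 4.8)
The plan is to reduce every integral in the statement to the ordinary orthogonality of the $R_k$ with respect to $d\mu(x)/(1+x^2)^2$ furnished by Theorem \ref{thrm:Rn}. The bridge is the identity \eqref{eq:sn} already established in the previous proposition, namely $\phi(x)S_n(x)=\frac{a_n}{c_n}R_n(x)+R_{n+2}(x)$, together with the fact that $\phi(x)=(1+x^2)^2$ is precisely the factor relating $d\mu$ to the weight against which the $R_k$ are orthogonal. Concretely, for every $m$ I would multiply and divide by $\phi$ and insert \eqref{eq:sn} to obtain
\[
\int_{-\infty}^{\infty} S_n(x)R_m(x)\,d\mu(x)
=\frac{a_n}{c_n}\int_{-\infty}^{\infty}\frac{R_n(x)R_m(x)}{(1+x^2)^2}\,d\mu(x)
+\int_{-\infty}^{\infty}\frac{R_{n+2}(x)R_m(x)}{(1+x^2)^2}\,d\mu(x).
\]
This recasts the problem entirely in terms of the inner product already under control.

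Next I would dispose of the vanishing cases. Writing $h_k:=\int_{-\infty}^{\infty}R_k(x)^2\,(1+x^2)^{-2}\,d\mu(x)\neq 0$, Theorem \ref{thrm:Rn} says that the first integral on the right is nonzero only when $m=n$ and the second only when $m=n+2$. Hence for every index $m\notin\{n,n+2\}$ both terms vanish and the whole expression is $0$. This single observation covers all the claimed vanishing indices at once: $m=0,1,\dots,n-1$ (where $m<n<n+2$) and $m=n+1,n+3,n+4,\dots$ (where $m\neq n,n+2$).

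It then remains to check nonvanishing at $m=n$ and $m=n+2$. The case $m=n+2$ is immediate: the first term dies and the integral equals $h_{n+2}\neq 0$. For $m=n$ the second term dies and the integral equals $\frac{a_n}{c_n}h_n$, which is nonzero precisely when $a_n\neq 0$. This is the only point that genuinely requires an argument beyond Theorem \ref{thrm:Rn}, and I expect it to be the crux. The key observation is that $a_n$, as the cofactor of $R_n(x)$ in the expansion of $D_n(x)$ along its third row, is
\[
a_n=R_{n+1}(i)R_{n+2}(-i)-R_{n+2}(i)R_{n+1}(-i),
\]
which is exactly $c_{n+1}$ in the notation of \eqref{def:cn}. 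Since $c_k\neq 0$ for every $k\geq 0$ by the remark following \eqref{def:cn} (which rests on Proposition \ref{Rni}), we conclude $a_n=c_{n+1}\neq 0$, so $\int_{-\infty}^{\infty}S_n(x)R_n(x)\,d\mu(x)=\frac{c_{n+1}}{c_n}h_n\neq 0$. The identification $a_n=c_{n+1}$ is what prevents the $m=n$ integral from collapsing and ties its nonvanishing back, via $c_{n+1}\neq 0$, to the already-proven fact that $R_k(i)\neq 0$; with it the proof is complete.
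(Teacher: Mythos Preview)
Your proof is correct and follows essentially the same route as the paper: you use the identity \eqref{eq:sn} to rewrite $\int S_n R_m\,d\mu$ as a combination of $\int R_n R_m\,(1+x^2)^{-2}\,d\mu$ and $\int R_{n+2} R_m\,(1+x^2)^{-2}\,d\mu$, invoke Theorem \ref{thrm:Rn} for the vanishing cases, and then identify $a_n=c_{n+1}\neq 0$ (via Proposition \ref{Rni}) to handle $m=n$. The paper's proof is the same argument stated more tersely.
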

\begin{proof} For the cases $m < n$, $m=n+1$, or $m>n+2$, we see
 by the orthogonality of $\{R_n(x)\}$, that
\begin{align*}\int_{-\infty}^\infty S_n(x)R_m(x) d\mu(x)&=
\int_{-\infty}^\infty \frac{a_n}{c_n}R_n(x)R_m(x)\frac{d\mu(x)}{(1+x^2)^2} \\
&+\int_{-\infty}^\infty R_{n+2}(x)R_m(x)\frac{d\mu(x)}{(1+x^2)^2} \\
&=0.
\end{align*} 
Lastly, we have
\[
\int_{-\infty}^\infty S_n(x)R_m(x) d\mu(x)\neq 0 \text{ for } m=n, n+2 \] since $a_n=c_{n+1}\neq 0$ for all $n=0,1,2,\dots$ by \eqref{def:cn}.
\end{proof}
It turns out that one can state the above theorem for polynomials $f(x)$ satisfying $f'(i)=f'(-i)=0$.
\begin{theorem}\label{thm:sn_f}
Let $f(x)\in \mathbb{C}[x]$ such that $f'(i)=f'(-i)=0$ and let deg($f)=m$. Then if $m<n+2$,
\[\int_{-\infty}^\infty S_n(x)f(x) d\mu(x) =0
\] 
\end{theorem}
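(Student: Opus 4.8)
The plan is to expand $f$ in a basis adapted to the $R_k$ and then reduce the claim to Theorem~\ref{thm:Sn_Rn}. Exactly as in the proof of Proposition~\ref{Rni}, the set $\{R_0(x),x,x^2,R_1(x),\dots,R_{m-2}(x)\}$ is a basis for polynomials of degree at most $m$ (its elements have the distinct degrees $0,1,2,3,\dots,m$, hence are independent and of the correct cardinality $m+1$), so I would first write
\[
f(x)=c_0R_0(x)+c_1x+c_2x^2+\sum_{j=3}^{m}c_jR_{j-2}(x),
\]
with (possibly complex) coefficients $c_j$; the expansion is valid over $\mathbb{C}[x]$ since the listed polynomials also form a $\mathbb{C}$-basis.

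Next I would use the hypothesis $f'(i)=f'(-i)=0$ to eliminate the two offending basis vectors $x$ and $x^2$. Since $R_0$ is constant and $R_k'(i)=R_k'(-i)=0$ for every $k\geq1$ by \eqref{eq:Rn'(i)=0}, differentiating the expansion and evaluating at $\pm i$ leaves only $f'(\pm i)=c_1\pm 2ic_2$. Setting both equal to zero forces $c_1=c_2=0$, so $f$ is a $\mathbb{C}$-linear combination of $R_0,R_1,\dots,R_{m-2}$ alone; this is the same mechanism already exploited in Proposition~\ref{Rni}.

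Finally, by linearity
\[
\int_{-\infty}^\infty S_n(x)f(x)\,d\mu(x)=\sum_{k=0}^{m-2}d_k\int_{-\infty}^\infty S_n(x)R_k(x)\,d\mu(x)
\]
for suitable scalars $d_k$, and the hypothesis $m<n+2$ gives $m-2<n$, so every index appearing satisfies $0\leq k\leq m-2\leq n-1$. By Theorem~\ref{thm:Sn_Rn} each integral $\int_{-\infty}^\infty S_n(x)R_k(x)\,d\mu(x)$ with $k\leq n-1$ vanishes, and the result follows. Alternatively, one may substitute \eqref{eq:sn} to split the integral into $\frac{a_n}{c_n}\int_{-\infty}^\infty R_n(x)f(x)\frac{d\mu(x)}{(1+x^2)^2}$ and $\int_{-\infty}^\infty R_{n+2}(x)f(x)\frac{d\mu(x)}{(1+x^2)^2}$ and invoke the orthogonality of Theorem~\ref{thrm:Rn} directly. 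I expect the only delicate point to be the degree bookkeeping: the bound $m<n+2$ is precisely what pushes the top index $m-2$ strictly below $n$, which is exactly the condition needed so that even the more restrictive $R_n$-term---not merely the $R_{n+2}$-term---integrates against $f$ to zero.
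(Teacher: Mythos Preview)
Your proof is correct and reaches the same key reduction as the paper: writing $f$ as a $\mathbb{C}$-linear combination of $R_0,R_1,\dots,R_{m-2}$ and then invoking Theorem~\ref{thm:Sn_Rn}. The only cosmetic difference is how that expansion is obtained: you use the basis $\{R_0,x,x^2,R_1,\dots,R_{m-2}\}$ and evaluate $f'(\pm i)$ to kill the $x$ and $x^2$ coefficients (exactly as in Proposition~\ref{Rni}), whereas the paper observes that $\{R_k'(x)/(1+x^2)\}_{k\ge1}$ is a basis for $\mathbb{C}[x]$, expands $f'(x)/(1+x^2)$ in it, and integrates back to $f=\sum_{k=0}^{m-2}a_kR_k$.
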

\begin{proof}
Since $R_k'(i)=R_k'(i)=0$ for all $k=0,1,2,\dots$, $\frac{R'_k(x)}{1+x^2}$ is a polynomial of degree $k-1$ for $k=1,2,\dots$ so $\left\{\frac{R'_k(x)}{1+x^2}\right\}_{k=1}^\infty$ is a basis for $\mathbb{C}[x]$. Thus, $\frac{f'(x)}{1+x^2}=\sum_{k=1}^{m-2}a_k\frac{R'_k(x)}{1+x^2}$. Multiplying by $1+x^2$ and then integrating we see 
\[
f(x)=\sum_{k=0}^{m-2}a_k R_k(x).
\]
The result now follows from Theorem \ref{thm:Sn_Rn}.
\end{proof}
The biorthogonality relationship from Theorem \ref{thm:Sn_Rn} allows us to define new polynomials which will be shown to coincide with our original polynomials $P_n(x)$.

\begin{theorem}\label{thm:Rho_n}  Let $\rho_n \in \mathbb{R}$ be such that
\begin{equation}\label{def:rho_n}\rho_n =
\begin{cases}
0 & n= 1,2\\
\frac{-\int_{-\infty}^\infty xS_n(x)d\mu(x) }{\int_{-\infty}^\infty xS_{n-2}(x)d\mu(x) } & n \text{ odd, } n\geq 3\\
\frac{-\int_{-\infty}^\infty x^2S_n(x)d\mu(x)  }{\int_{-\infty}^\infty x^2S_{n-2}(x)d\mu(x) } & n \text{ even, } n\geq 4.
\end{cases}
\end{equation}
Then defining $S_{-1}(x):=0$, the monic polynomial $\mathcal{P}_n(x):= S_n(x)+\rho_nS_{n-2}(x)$ is orthogonal to $\{1, x, x^2, R_1(x), R_2(x), \dots, R_{n-3}(x)\}$  with respect to $\mu$ for all $n \geq 1$.
\end{theorem}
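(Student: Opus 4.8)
The plan is to observe that, listed by degree, the test set $\{1,x,x^2,R_1,\dots,R_{n-3}\}$ consists of polynomials of the distinct degrees $0,1,2,3,\dots,n-1$ (recall $\deg R_k=k+2$, and the $R$-block is empty when $n\le 3$), so it is a basis of the polynomials of degree at most $n-1$. Hence the assertion is that the monic degree-$n$ polynomial $\mathcal{P}_n$ is orthogonal with respect to $\mu$ to everything of lower degree, i.e. that $\mathcal{P}_n$ coincides with the monic orthogonal polynomial $P_n$. Since $S_n$ and $S_{n-2}$ have the same parity, $\mathcal{P}_n(-x)=(-1)^n\mathcal{P}_n(x)$, and because $\mu$ is symmetric I would split the verification according to whether a test element has the same or the opposite parity as $\mathcal{P}_n$: opposite-parity pairings vanish automatically, while same-parity pairings are handled by the orthogonality results of Section~4.

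First I would dispose of the elements $R_0=1,R_1,\dots,R_{n-3}$. For every $k\le n-3$ Theorem~\ref{thm:Sn_Rn} gives $\int_{-\infty}^\infty S_nR_k\,d\mu=0$ (since $k\le n-1$) and $\int_{-\infty}^\infty S_{n-2}R_k\,d\mu=0$ (since $k\le n-3=(n-2)-1$), whence $\int_{-\infty}^\infty \mathcal{P}_nR_k\,d\mu=0$; the same computation with $k=0$ settles the constant $1$, the only caveat being $n=2$, where $S_{n-2}=S_0$ fails this but $\rho_2=0$ kills the offending term. This leaves the monomials $x$ and $x^2$. When $x$ has the opposite parity to $\mathcal{P}_n$ (that is, $n$ even) the pairing $\int_{-\infty}^\infty x\,\mathcal{P}_n\,d\mu$ vanishes by symmetry, and likewise $\int_{-\infty}^\infty x^2\mathcal{P}_n\,d\mu$ vanishes by symmetry when $n$ is odd.

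The substantive cases are therefore $\int_{-\infty}^\infty x\,\mathcal{P}_n\,d\mu$ for $n$ odd and $\int_{-\infty}^\infty x^2\mathcal{P}_n\,d\mu$ for $n$ even, and these are exactly the two identities built into the definition of $\rho_n$: writing $\mathcal{P}_n=S_n+\rho_nS_{n-2}$, the prescribed value of $\rho_n$ makes $\int_{-\infty}^\infty x\,\mathcal{P}_n\,d\mu=\int_{-\infty}^\infty x\,S_n\,d\mu+\rho_n\int_{-\infty}^\infty x\,S_{n-2}\,d\mu=0$ for odd $n$, and $\int_{-\infty}^\infty x^2\mathcal{P}_n\,d\mu=0$ for even $n$. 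The main obstacle is that this only makes sense once the denominators $\int_{-\infty}^\infty xS_{n-2}\,d\mu$ (odd $n$) and $\int_{-\infty}^\infty x^2S_{n-2}\,d\mu$ (even $n$) are known to be nonzero.

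I would remove this obstacle by a self-contained nonvanishing argument applied to $S_m$ with $m=n-2$. Suppose the relevant denominator vanished. If $m$ is even this means $\int_{-\infty}^\infty x^2S_m\,d\mu=0$; combined with $\int_{-\infty}^\infty x\,S_m\,d\mu=0$ (opposite parity, hence zero by symmetry) and with $\int_{-\infty}^\infty S_mR_k\,d\mu=0$ for all $0\le k\le m-1$ from Theorem~\ref{thm:Sn_Rn}, we would conclude that $S_m$ is orthogonal to the polynomials $1,x,x^2,R_1,\dots,R_{m-1}$, whose degrees are $0,1,2,\dots,m+1$. Being of distinct degrees, these form a basis of the polynomials of degree at most $m+1$, a space containing $S_m$ itself; hence $\int_{-\infty}^\infty S_m^2\,d\mu=0$, which is impossible since $\mu$ is a positive measure and $S_m$ is a nonzero monic polynomial. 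The case $m$ odd is identical, with the roles of $x$ and of $\{1,x^2\}$ interchanged in the symmetry bookkeeping. This contradiction shows the denominators are nonzero, so $\rho_n$ is a well-defined real number and the two key pairings vanish. Finally, $\mathcal{P}_n$ is monic of degree $n$ because $\deg(\rho_nS_{n-2})\le n-2$, which completes the proof.
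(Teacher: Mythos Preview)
Your proof is correct and follows the same overall architecture as the paper: handle the $R_k$'s (including $R_0=1$) via Theorem~\ref{thm:Sn_Rn}, dispose of the opposite-parity monomial by symmetry, and kill the remaining monomial by the very definition of $\rho_n$, with the well-definedness of $\rho_n$ established by a contradiction that forces $\int S_m^2\,d\mu=0$. The one genuine difference is in that last step: the paper reaches the contradiction by an induction showing $\int x^kS_{n-2}\,d\mu=0$ for all relevant $k$ of the right parity (using the $R_k$-orthogonality to climb in $k$), whereas you observe directly that $\{1,x,x^2,R_1,\dots,R_{m-1}\}$ is a basis of the polynomials of degree $\le m+1$ and hence contains $S_m$ in its span---a tidier route to the same endpoint.
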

\begin{proof} 
To show $\rho_n$ is well defined for $n\geq 3$, consider the case where $n$ is odd and assume by way of contradiction that $\int_{-\infty}^\infty xS_{n-2}(x)d\mu(x)=0 $. We claim this implies that if $k$ is a positive, odd integer, then
\begin{equation}\label{eq:x^k_Sn}\int_{-\infty}^\infty x^kS_{n-2}(x)d\mu(x) =0
\end{equation} for all $n>k$. Recall that $R_{k}(x)$ is a monic polynomial of degree $k+2$ and by Theorem \ref{thm:Sn_Rn},
\[\int_{-\infty}^\infty S_{n-2}(x)R_k(x)d\mu(x)=0 \] for $n>k+2$. But, 
\begin{multline*}
\int_{-\infty}^\infty S_{n-2}(x)R_k(x)d\mu(x) = \int_{-\infty}^\infty x^{k+2}S_{n-2}(x)d\mu(x)\\
+\int_{-\infty}^\infty x^{k}S_{n-2}(x)d\mu(x)+\dots+ \int_{-\infty}^\infty xS_{n-2}(x)d\mu(x)
\end{multline*} thus, by induction, $\int_{-\infty}^\infty x^kS_{n-2}(x)d\mu(x) =0$ for all $n>k$. Therefore, equation $\eqref{eq:x^k_Sn}$ implies that $S_n(x)$ is orthogonal to $\{x, x^3, x^5, \dots, x^n\}$ for all $n$ odd, $n \geq 1$. In particular,
\[\int_{-\infty}^\infty S_{n}^2(x)d\mu(x)=0
\] which is a contradiction. Therefore, $\rho_n$ is well defined for all $n$ odd, $n \geq 1$. \\
\indent The same reasoning holds for when $n$ is even since then we will have that $S_n(x)$ is orthogonal to $\{1, x^2, x^4, \dots, x^n\}$ for all $n \geq 2$. Therefore, $\rho_n$ is well-defined for all $n \geq 1$.\\
\indent It remains to show the orthogonality of $\mathcal{P}_n(x)$ with each polynomial in \newline $\{1, x, x^2, R_1(x), R_2(x), \dots, R_{n-3}(x)\}$. Notice that
\begin{align*}
    \int_{-\infty}^\infty \mathcal{P}_n(x)R_k(x) d\mu(x) &= \int_{-\infty}^\infty S_n(x)R_k(x)d\mu(x) +\rho_n\int_{-\infty}^\infty S_{n-2}(x)R_k(x)d\mu(x)  \\
    &=0 \text{ if }k<n-2
\end{align*} by Theorem \ref{thm:Sn_Rn}. Thus, $\int_{-\infty}^\infty \mathcal{P}_n(x)R_k(x)d\mu(x)=0 $ for all $k=1, 2, \dots, n-3$.\\
\indent Now, 
\begin{align*}
    \int_{-\infty}^\infty \mathcal{P}_n(x) d\mu(x)&= \int_{-\infty}^\infty S_n(x)d\mu(x)+\rho_n\int_{-\infty}^\infty S_{n-2}(x) d\mu(x)\\
    &= \int_{-\infty}^\infty \left(\frac{a_n}{c_n}R_n(x)+R_{n+2}(x)\right) \frac{d\mu(x)}{(1+x^2)^2} dx\\
    &+ \rho_n\int_{-\infty}^\infty\left(\frac{a_{n-2}}{c_{n-2}}R_{n-2}(x)+R_n(x)\right)\frac{d\mu(x)}{(1+x^2)^2} dx\\
    &=0 \text{ for } n\geq 3
\end{align*} where the last equality holds by the orthogonality of $R_n(x)$ with $R_0(x)$. If $n=1$ then 
\[\int_{-\infty}^\infty \mathcal{P}_1(x)d\mu(x)= \int_{-\infty}^\infty S_1(x) d\mu(x)=0\] since $S_1(x)$ is odd (and by the orthogonality of $R_1(x)$ and $R_3(x)$ with $R_0(x)$).\\
\indent If $n=2$, then
\[\int_{-\infty}^\infty \mathcal{P}_2(x)d\mu(x)= \int_{-\infty}^\infty S_2(x) d\mu(x)+\rho_2\int_{-\infty}^\infty S_{0}(x)d\mu(x)
=0\] by definition of $\rho_2$ and $S_2(x)$, so that $\mathcal{P}_2(x)$ is orthogonal to 1 and thus $\mathcal{P}_n(x)$ is orthogonal to 1 for all $n \geq 1$.\\
\indent It is now easy to see that $\mathcal{P}_n(x)$ is orthogonal to $x$ for all $n \geq 2$ since if $n$ is even, $x\mathcal{P}_n(x)$ is an odd function so that
\[\int_{-\infty}^\infty x\mathcal{P}_n(x)d\mu(x) =0.
\]  If $n$ is odd, then
\begin{equation*}\int_{-\infty}^\infty x\mathcal{P}_n(x)d\mu(x) = \int_{-\infty}^\infty xS_n(x)d\mu(x)+\rho_n\int_{-\infty}^\infty xS_{n-2}(x)d\mu(x) =0
\end{equation*}
by the definition of $\rho_n$. Thus, $\mathcal{P}_n(x)$ is orthogonal to $x$ for $n \geq 2$. \\
\indent Similarly, $\mathcal{P}_n(x)$ is orthogonal to $x^2$ for $n\geq 3$ since if $n$ is odd, then 
\[\int_{-\infty}^\infty x^2\mathcal{P}_n(x)d\mu(x) =0
\] since $x^2\mathcal{P}_n(x)$ is an odd function, and if $n$ is even then the result follows by definition of $\rho_n$.\\
\indent Thus, $\mathcal{P}_n(x)$ is orthogonal to $\{1, x, x^2, R_1(x), R_2(x), \dots ,R_{n-3}(x)\}$ with respect to 
$\mu$ as wanted.
\end{proof}

\begin{theorem}\label{GenChristoffel} Let $\tcnew{\{P_n(x)\}_{n\geq 0}}$ be a sequence of monic orthogonal polynomials with respect to a symmetric measure $\mu$  let $\tcnew{\{R_n(x)\}_{n\geq 0}}$ be the family of polynomials defined by $R_n(x)=P_{n+2}(x)+A_nP_n(x)+B_n(x)P_{n-2}(x)$. Assume that $\tcnew{\{R_n(x)\}_{n\geq 0}}$ satisfies the following:
\begin{enumerate}
    \item $R'_n(i)=0 \quad n=1,2,\dots$
\item $\int_{-\infty}^{\infty}  R_0(x)R_n(x)\frac{d\mu(x)}{(1+x^2)^2}=0 \quad n=1,2,\dots$
     \item $\
     \int_{-\infty}^{\infty}  R_1(x)R_n(x)\frac{d\mu(x)}{(1+x^2)^2}=0 \quad n \geq 2$.
\end{enumerate}
Then 
\begin{multline*}
    \phi(x)P_n(x)= \\
= \left[ \frac{1}{c_n}\begin{vmatrix}R_n(i)&R_{n+1}(i)&R_{n+2}(i)\\
R_n(-i)&R_{n+1}(-i)&R_{n+2}(-i)\\
R_n(x)&R_{n+1}(x)&R_{n+2}(x)
\end{vmatrix}+\frac{\rho_n}{c_{n-2}}\begin{vmatrix}R_{n-2}(i)&R_{n-1}(i)&R_{n}(i)\\
R_{n-2}(-i)&R_{n-1}(-i)&R_{n}(-i)\\
R_{n-2}(x)&R_{n-1}(x)&R_{n}(x)
\end{vmatrix}\right]
\end{multline*}
where $\phi(x)=(1+x^2)^2$ and $\rho_n$ and $c_n$ are given by \eqref{def:rho_n} and \eqref{def:cn}, respectively. 
\end{theorem}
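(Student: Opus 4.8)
The plan is to recognize that the right-hand side of the claimed identity is, up to the common factor $\phi(x)$, nothing but the polynomial $\mathcal{P}_n(x)$ produced in Theorem \ref{thm:Rho_n}, and then to argue that $\mathcal{P}_n(x)=P_n(x)$ via the uniqueness of monic orthogonal polynomials.

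First I would rewrite the bracketed expression using the definition of $S_n$. By \eqref{def:sn} the first $3\times 3$ determinant divided by $c_n$ equals $\phi(x)S_n(x)$, and likewise the second determinant (whose rows involve $R_{n-2},R_{n-1},R_n$) divided by $c_{n-2}$ equals $\phi(x)S_{n-2}(x)$. Hence the entire right-hand side collapses to
\[
\phi(x)\bigl(S_n(x)+\rho_n S_{n-2}(x)\bigr)=\phi(x)\mathcal{P}_n(x),
\]
where $\mathcal{P}_n$ is exactly the monic polynomial of Theorem \ref{thm:Rho_n}. Since $\phi(x)$ is a common nonzero factor, the theorem is equivalent to the single statement $\mathcal{P}_n(x)=P_n(x)$.

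Next I would establish this equality by the standard uniqueness argument. By Theorem \ref{thm:Rho_n}, $\mathcal{P}_n$ is monic of degree $n$ and orthogonal with respect to $\mu$ to the set $\{1, x, x^2, R_1(x), \dots, R_{n-3}(x)\}$. The key observation is that, since $\deg R_k=k+2$, these polynomials have degrees $0,1,2,3,\dots,n-1$ respectively; being of pairwise distinct degrees they form a basis of the space of polynomials of degree at most $n-1$. Consequently $\mathcal{P}_n$ is orthogonal with respect to $\mu$ to every polynomial of degree less than $n$. As the monic polynomial of degree $n$ orthogonal to all lower-degree polynomials with respect to $\mu$ is unique, I conclude $\mathcal{P}_n=P_n$, and multiplying through by $\phi(x)$ yields the asserted formula.

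There is essentially no hard analytic step here: the substantive work (well-definedness of $\rho_n$ and the biorthogonality relations) has already been carried out in Theorems \ref{thm:Sn_Rn} and \ref{thm:Rho_n}. The only points that require care are the bookkeeping at small $n$ — checking that the degree-matched orthogonalities of Theorem \ref{thm:Rho_n} (to $1$ for $n\geq 1$, to $x$ for $n\geq 2$, to $x^2$ for $n\geq 3$, and to each $R_k$ with $k\leq n-3$) together span exactly degrees $0$ through $n-1$ — and the verification that the determinant-to-$S$ identification of \eqref{def:sn} applies to the shifted index $n-2$ as well. I expect the basis-and-uniqueness step to be the only place demanding mild attention, and it presents no genuine obstacle.
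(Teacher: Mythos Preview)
Your proposal is correct and follows essentially the same route as the paper: identify the right-hand side with $\phi(x)\mathcal{P}_n(x)$ via \eqref{def:sn}, observe that $\{1,x,x^2,R_1,\dots,R_{n-3}\}$ is a basis for polynomials of degree less than $n$, and invoke uniqueness of monic orthogonal polynomials to conclude $\mathcal{P}_n=P_n$. The paper's proof is the same argument, stated slightly more tersely.
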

\begin{proof}First, since $\{1, x, x^2, R_1(x), R_2(x), \dots R_{n-3}(x)\}$ is a basis for polynomials of degree less than $n$,  we can write $x^k=a_0+a_1x+a_2x^2+a_3R_1(x)+\dots +a_kR_{k-2}(x)$ for $k < n$. Therefore, by Theorem \ref{thm:Rho_n}, $\{\mathcal{P}_n(x)\}_{n\geq 0}$ is orthogonal to $\{1, x, x^2,\dots, x^{n-1}\}$. Also, since $\mathcal{P}_n(x)$ is a polynomial of degree $n$, this orthogonality implies that $\int_{-\infty}^\infty x^n\mathcal{P}_n(x)d\mu(x)\neq 0$ for all $n \geq 1$ (otherwise $\mathcal{P}_n(x) \equiv 0$ for all $n \geq 1$). Therefore, $\tcnew{\{\mathcal{P}_n(x)\}_{n\geq 0}}$ is a monic OPS with respect to $\mu$. By the uniqueness of orthogonal polynomial systems, we must have that $P_n(x)=\mathcal{P}_n(x)$.
\end{proof}

\tcnew{It is interesting to note that while the polynomials $S_n(x)$ defined in \eqref{def:sn} may not form an orthogonal polynomial sequence (as will be shown in Section 6), they still have familiar behavior of their zeros.}
\begin{proposition}\label{prop:Sn_Zeros}
The zeros of $S_n(x)$ are real and simple \tcnew{and lie in the interior of the support of $d\mu(x)$}.
\end{proposition}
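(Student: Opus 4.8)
The plan is to run the classical sign-change argument for zeros of orthogonal polynomials, adapted to the ``missing degrees'' setting in exactly the way the proof of Proposition \ref{Rni} was carried out, with the full orthogonality replaced by Theorem \ref{thm:sn_f}. I would let $x_1<x_2<\dots<x_k$ be the points in the interior of the support of $\mu$ at which $S_n(x)$ changes sign (equivalently, its interior real zeros of odd multiplicity). Since $\deg S_n=n$, the polynomial $S_n$ has at most $n$ real zeros, so $k\leq n$, and the entire statement reduces to showing that $k=n$.

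First I would dispose of the degenerate case $k=0$: if $S_n$ had no sign change on the support, it would be of one sign there and hence $\int_{-\infty}^\infty S_n(x)\,d\mu(x)\neq 0$; but the constant $f\equiv 1$ satisfies $f'(i)=f'(-i)=0$ and $\deg f=0<n+2$, so Theorem \ref{thm:sn_f} gives $\int_{-\infty}^\infty S_n(x)\,d\mu(x)=0$, a contradiction. Assuming then $1\leq k<n$, I would apply Lemma \ref{lemma:zeros} to the distinct reals $x_1,\dots,x_k$ to obtain a monic polynomial $f$ of degree $k+2$ whose only real zeros are $x_1,\dots,x_k$ and which satisfies $f'(i)=f'(-i)=0$. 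Writing $f=QP$ with $Q(x)=\prod_{j=1}^k(x-x_j)$ and $P$ the remaining real quadratic factor, $P$ has no real roots and is therefore strictly positive.

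The decisive observation is that $S_n(x)f(x)$ is of constant sign on the support of $\mu$: in the interior the only sign changes of $S_n$ occur at the $x_j$, where $Q$ changes sign as well, so $S_nQ$ does not change sign there, and since $P>0$ and $f$ has no further real zeros, $S_nf$ keeps a single sign. Thus $S_nf$ is a nonzero polynomial of one sign on the support, whence $\int_{-\infty}^\infty S_n(x)f(x)\,d\mu(x)\neq 0$. On the other hand $\deg f=k+2<n+2$ and $f'(\pm i)=0$, so Theorem \ref{thm:sn_f} forces $\int_{-\infty}^\infty S_n(x)f(x)\,d\mu(x)=0$, a contradiction. Hence $k=n$; that is, $S_n$ has $n$ distinct sign changes inside the support, and since $\deg S_n=n$ these exhaust all its zeros, each is simple, and all of them lie in the interior of the support.

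I expect the only delicate points to be the verification that $S_nf$ genuinely has constant sign on the support and the separate handling of $k=0$. The former hinges on the structure of the auxiliary polynomial $f$ supplied by Lemma \ref{lemma:zeros}, namely that its real zeros are simple and exactly the $x_j$ while its quadratic factor is positive; once the sign changes of $S_n$ and of $Q$ are matched at the $x_j$, any even-order contact of $S_n$ with zero in the interior is harmless. Both points are routine once the correct $f$ is in hand, and the overall mechanism is identical to that of Proposition \ref{Rni}.
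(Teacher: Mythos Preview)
Your proposal is correct and follows essentially the same approach as the paper: both use $\int S_n\,d\mu=0$ to rule out $k=0$, then invoke Lemma~\ref{lemma:zeros} to build the auxiliary polynomial $f$ of degree $k+2$ with $f'(\pm i)=0$, and appeal to Theorem~\ref{thm:sn_f} for the contradiction. Your write-up is in fact a bit more explicit than the paper's about why $S_nf$ has constant sign on the support (via the factorization $f=QP$ with $P>0$), which is a welcome clarification.
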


\begin{proof}
First, since $S_n(x) \neq 0$ for $n=0,1,2,\dots$ and \[\int_{-\infty}^\infty S_n(x) d\mu(x) =\int_{-\infty}^\infty S_n(x)R_0(x) d\mu(x)=0\] for all $n=1,2,\dots$, it must be the case that \tcnew{for for $n=1,2,\dots$,  $S_n(x)$ has at least one zero of odd multiplicity which lies in the interior of the support of $d\mu(x)$.} So let $x_1, x_2, \dots x_k$, $1\leq k \leq n$, be the distinct zeros of odd multiplicity of $S_n(x)$ \tcnew{in the interior of the support of $d\mu(x)$}. By Lemma $\ref{lemma:zeros}$, there exists a polynomial $f(x)$ of degree $k+2$ such that $f(x_j)=0$ for $j=1, 2,\dots, k$ and $f(i)=f(-i)=0$. Then $S_n(x)f(x)e^{-\frac{x^2}{2}}\geq 0$ for all $x \in(-\infty, \infty)$. But, by Theorem $\ref{thm:sn_f}$,
\[\int_{-\infty}^\infty
S_n(x)f(x) d\mu(x) =0\]
for $k<n$ so we must have that $k=n$ and hence $S_n(x)$ has $n$ distinct, real zeros \tcnew{in the interior of the support of $d\mu(x)$.} 
\end{proof} 
\tcnew{We also have that the zeros of $S_{n}(x)$ and $S_{n+1}(x)$ interlace. Before proving this behavior, we recall the definition of an integrable Markov system (see \cite{K70}).}
\begin{definition}
   \tcnew{ Let $m_k(x), k=0,1,2,\dots$, be real valued functions on $\mathbb{R}$. Then the sequence $\{m_k(x)\}_{k\geq 0}$ forms an \textbf{integrable Markov system} on $(a,b)$ if
    \begin{enumerate}
        \item For each $k=0,1,2,\dots, w_k(x)$ is defined on $(a,b)$ and $\int_{a}^{b}m_k(x) \, dx<\infty$ .
        \item For $n=1,2,\dots$ and arbitrary scalars $a_0, a_1,\dots$ (not all zero), the function
        \[
        f(x):=\sum_{k=0}^{n-1} a_km_k(x)
        \] has at most $n-1$ zeros in $(a,b)$. 
    \end{enumerate}}
\end{definition}
We are now in a position to show the interlacing property of the zeros of $S_n(x)$ and $S_{n+1}(x)$ when the associated measure has density with respect to the Lebesgue measure. 
\begin{proposition}\label{thm:Sn_Zeros}
    \tcnew{Let $\{P_n(x)\}$ be a sequence of polynomials orthogonal with respect to the positive weight function $w(x)$ on $(a,b)$ and let $\{R_n(x)\}_{n\geq 0}$ and $\{S_n(x)\}_{n\geq 0}$ be the corresponding families of polynomials defined in \eqref{eq:Rn} and \eqref{def:sn}, respectively.  
 Denote the $k$-th zero of $S_n(x)$ by ${x_{nk}}$. Then 
    \[
    x_{n+1,k}<x_{nk}<x_{n+1,k+1}, 
    \quad k=1,2, \dots, n.
    \]}
\end{proposition}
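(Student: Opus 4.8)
The plan is to recognize the polynomials $S_n$ as the orthogonal functions associated with an integrable Markov system and then to quote the corresponding interlacing theorem. First I would record the precise orthogonality that the $S_n$ enjoy. Each $\psi_k$ from \eqref{psi} satisfies $\psi_k'(x)=(k+2)(1+x^2)x^{k-1}$, so $\psi_k'(i)=\psi_k'(-i)=0$, and $\deg\psi_k=k+2$; hence Theorem \ref{thm:sn_f} (with $f=\psi_k$, valid as long as $k+2<n+2$) gives $\int_a^b S_n(x)\psi_k(x)w(x)\,dx=0$ for $k=0,1,\dots,n-1$. Setting $m_k(x):=\psi_k(x)w(x)$, this says exactly that $S_n$ is orthogonal, with respect to Lebesgue measure on $(a,b)$, to $m_0,\dots,m_{n-1}$; in the language of \cite{IN88} the family $\{S_n\}$ is the sequence of biorthogonal polynomials generated by the system $\{m_k\}_{k\ge 0}$, the two missing low degrees being accounted for by the non-regularity already noted in Section 3.

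Next I would verify that $\{m_k\}_{k\ge 0}$ is an integrable Markov system on $(a,b)$ in the sense of the definition preceding the proposition. Integrability is immediate, since $w$ has finite polynomial moments (it is the weight of the orthogonal polynomials $P_n$). The substance is the Haar property: for scalars $a_0,\dots,a_{n-1}$ not all zero I must show that $\sum_{k=0}^{n-1}a_k m_k(x)=w(x)g(x)$, with $g:=\sum_{k=0}^{n-1}a_k\psi_k$, has at most $n-1$ zeros in $(a,b)$. Because $w>0$ there, the real zeros of $wg$ coincide with those of $g$, and $g$ is a polynomial of degree up to $n+1$, so a bare degree count is useless. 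The decisive structural fact is that $g'(x)=(1+x^2)h(x)$ with $h(x)=\sum_{k=1}^{n-1}a_k(k+2)x^{k-1}$ of degree at most $n-2$; since $1+x^2$ has no real zeros, $g'$ has at most $n-2$ real zeros counted with multiplicity. By Rolle's theorem the number of real zeros of $g$ exceeds that of $g'$ by at most one, so $g$ has at most $n-1$ real zeros, which is condition (2).

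Finally, having established that $\{m_k\}$ is an integrable Markov system, I would apply the interlacing theorem for the orthogonal functions of such a system (\cite{K70}; cf. the interlacing results for biorthogonal polynomials in \cite{IN88}). Combined with Proposition \ref{prop:Sn_Zeros}, which guarantees that $S_n$ and $S_{n+1}$ have exactly $n$ and $n+1$ simple real zeros in the interior of $(a,b)$, this yields the strict interlacing $x_{n+1,k}<x_{nk}<x_{n+1,k+1}$ for $k=1,\dots,n$. I expect the main obstacle to be precisely the Haar verification in the previous paragraph: it is the interpolation conditions at $\pm i$, equivalently the divisibility of $g'$ by $1+x^2$, that compensate for the two absent degrees and rescue the Chebyshev property that a naive degree count would otherwise violate; once this is secured, the interlacing follows as a black-box consequence of the Markov-system machinery.
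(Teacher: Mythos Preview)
Your proposal is correct and follows essentially the same route as the paper: you build the integrable Markov system $m_k=w\psi_k$, verify the Haar property via the factorization $g'(x)=(1+x^2)h(x)$ with $\deg h\le n-2$ (the paper phrases this as $f'(i)=f'(-i)=0$ and invokes the mean value theorem), and then apply Kershaw's interlacing result \cite{K70} together with Proposition~\ref{prop:Sn_Zeros}. The orthogonality input $\int_a^b S_n\psi_k\,w\,dx=0$ for $k<n$ via Theorem~\ref{thm:sn_f} is likewise exactly what the paper uses.
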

\begin{proof}
\tcnew{Consider the family of functions $\{\psi_k(x)\}_{k\geq 0}$ where 
\[
\psi_0(x)=1 \text{ and } \psi_k(x)=x^{k+2}+\frac{k+2}{k}x^k, \quad k=1,2,\dots 
\]
as in \eqref{psi} and let $m_k(x)=w(x)\psi_k(x)$ for $k=0,1,\dots$. Then $m_k(x)$ is integrable for $k=0,1,2,\dots$ since the measure $w(x)dx$ has finite moments. Also, $\sum_{k=0}^{n-1} a_k m_k(x)=w(x)\sum_{k=0}^{n-1} a_k\psi_k(x)$ where $f(x):=\sum_{k=0}^{n-1} a_k\psi_k(x)$ is a polynomial of degree at most $n+1$. Note that $f'(i)=f'(-i)=0$, therefore $f'(x)$ has at most $n-2$ real zeros. Thus, by the mean value theorem, $f(x)$ can have at most $n-1$ real zeros. Since $w(x)$ is non-zero on $(a,b)$, we see that $\sum_{k=0}^{n-1} a_k w_k(x)$ can have at most $n-1$ zeros in $(a,b)$, hence, 
 $\{m_k(x)\}_{k\geq 0}$ forms an integrable Markov system.\\
\indent Recall that
\[
\int_{a}^{b}\psi_k(x)S_{n+1}(x)w(x)dx=0
\] for $k<n+1$ by Theorem \ref{thm:sn_f}, and the zeros of $S_{n+1}(x)$ are real and distinct by Proposition \ref{prop:Sn_Zeros}, thus the result follows from Theorem 3$(iii)$ of \cite{K70}}.

\end{proof}
\begin{remark}
  \tcnew{One can check that Theorem 3$(iii)$ of \cite{K70} can be extended to any measure supported on $\mathbb{R}$ with finite moments so that Proposition \ref{thm:Sn_Zeros} also holds in the more general case.} 
\end{remark}
\end{section}

\section{\tcnew{Recurrence relations}}

In this section we establish that the polynomials $R_n(x)$ satisfy a higher-order recurrence relation \tc{and at the same time we show that \tcnew{the $R_n$} are a discrete Darboux transformation of the original family $P_n$}. \tcnew{Note that in \cite{GKKM} it was shown that exceptional Hermite polynomials satisfy a family of recurrence relations and we demonstrate that a similar statement is also valid for the $R_n$.}

It has been shown in \cite{BS95}, \cite{SO95A}, \cite{SO95B} that the differential operator underlying \eqref{DEforF} can be obtained via a double commutation method (aka \tcnew{continuous }Darboux transformation) from the analogous differential operator corresponding to Hermite polynomials. \tc{The ideology of generating new nonclassical orthogonal polynomials from the classical ones goes back the works of \tcnew{Gr\"unbaum} and Haine (see \cite{GH97}, \cite{GHH99} and the references therein) and now} we are in the position to demonstrate that a similar situation occurs for the difference operators underlying the \tcnew{DEK-type orthogonal} polynomials and the corresponding conventional orthogonal polynomials. To this end, let us consider the monic Jacobi matrix
\begin{equation}\label{J_h}
  J=\begin{pmatrix}
0 & 1 & 0 & \cdots   \\
     a_1 & 0 & 1 &   \\
     0 & a_2 & 0 & \ddots  \\
     \vdots & & \ddots & \ddots
\end{pmatrix}  
\end{equation}
that corresponds to the family of symmetric orthogonal polynomials $\{P_n(x)\}$ that we started with, that is,
\[
J{\bf P}(x)=x{\bf P}(x),
\]
where ${\bf P}(x)=(P_0(x), P_1(x), P_2(x),\dots)^{\top}$. From \eqref{eq:Rn} and Theorem \ref{GenChristoffel} we conclude that
\begin{equation}\label{dDarboux}
 {\bf R}(x)=A{\bf P}(x), \quad \phi(x){\bf P}(x)=B{\bf R}(x),   
\end{equation}
where ${\bf R}(x)=(R_0(x), R_1(x), R_2(x),\dots)^{\top}$, $A$ and $B$ are some banded matrices. Then the following result holds true.

\begin{theorem}\label{DarbouxForR}
We have that
\begin{equation}\label{RecForF}
    AB{\bf R}(x)=\phi(x){\bf R}(x)
\end{equation}
and
\begin{equation}\label{mGeronimus}
    BA=(J^2+I)^2
\end{equation}
where $J$ is given by \eqref{J_h} and $I$ is the identity matrix.
\end{theorem}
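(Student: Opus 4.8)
The plan is to derive both identities directly from the two factorization relations in \eqref{dDarboux} together with the eigenvalue relation $J\mathbf{P}(x)=x\mathbf{P}(x)$, and then to pass from a polynomial identity to an identity of (banded) matrices using the linear independence of the basis $\{P_n(x)\}$.

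First I would establish \eqref{RecForF}, which is immediate. Since $\phi(x)$ is a scalar it commutes with any matrix of constant entries, so multiplying the first relation in \eqref{dDarboux} by $\phi(x)$ gives
\[
\phi(x)\mathbf{R}(x)=\phi(x)A\mathbf{P}(x)=A\bigl(\phi(x)\mathbf{P}(x)\bigr)=AB\mathbf{R}(x),
\]
where the last step uses the second relation in \eqref{dDarboux}. This is the entire content of \eqref{RecForF}, and it holds as an identity of vectors of polynomials with no further work.

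Next I would turn to \eqref{mGeronimus}. The key observation is that there are two distinct routes to $\phi(x)\mathbf{P}(x)$. On one hand, from $J\mathbf{P}(x)=x\mathbf{P}(x)$ we get $(J^2+I)\mathbf{P}(x)=(x^2+1)\mathbf{P}(x)$ and hence
\[
(J^2+I)^2\mathbf{P}(x)=(x^2+1)^2\mathbf{P}(x)=\phi(x)\mathbf{P}(x),
\]
using $\phi(x)=(1+x^2)^2$. On the other hand, composing the two relations in \eqref{dDarboux} yields $\phi(x)\mathbf{P}(x)=B\mathbf{R}(x)=BA\mathbf{P}(x)$. Subtracting, the matrix $M:=(J^2+I)^2-BA$ satisfies $M\mathbf{P}(x)=0$.

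Finally I would conclude that $M=0$. Because $A$, $B$ and $J$ are banded, so is $M$, and each of its rows has only finitely many nonzero entries; thus the identity $M\mathbf{P}(x)=0$ reads, row by row, as a \emph{finite} linear combination $\sum_j M_{ij}P_j(x)\equiv 0$. Since the $P_j$ have pairwise distinct degrees they are linearly independent, which forces every $M_{ij}=0$, i.e.\ $BA=(J^2+I)^2$. The step requiring the most care is precisely this last passage: it is the bandedness of $A$ and $B$ (which follows from the three-term structure of $R_n$ in \eqref{eq:Rn} and from the banded expression for $\phi(x)P_n(x)$ furnished by Theorem \ref{GenChristoffel}) that keeps the row sums finite and legitimizes reading off the matrix entries one at a time. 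Without this finiteness the vanishing of $M\mathbf{P}(x)$ would not by itself force $M=0$.
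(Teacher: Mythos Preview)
Your argument is correct and follows essentially the same route as the paper: for \eqref{RecForF} you commute the scalar $\phi(x)$ through $A$ and substitute, and for \eqref{mGeronimus} you combine $BA\mathbf{P}(x)=\phi(x)\mathbf{P}(x)$ with $(J^2+I)^2\mathbf{P}(x)=\phi(x)\mathbf{P}(x)$ and then invoke linear independence of the $P_n$. Your explicit remark that bandedness makes each row a finite sum, so that linear independence legitimately forces $M=0$, is a welcome clarification that the paper only hints at.
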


\begin{proof}
To get \eqref{RecForF}, one multiplies the second relation in \eqref{dDarboux} by $A$ on the left and uses the first relation to get rid of ${\bf P}(x)$. Similarly, we get 
\[
BA{\bf P}(x)=\phi(x){\bf P}(x)
\]
Since $\phi(x)=(1+x^2)^2$, we have $\phi(x){\bf P}(x)=(J^2+I)^2{\bf P}(x)$ and hence
\[
BA{\bf P}(x)=(J^2+I)^2{\bf P}(x).
\]
The latter relation gives \eqref{mGeronimus} because any finite number of the orthogonal polynomials form a linearly independent system.
\end{proof}
\begin{remark}
Formula \eqref{RecForF} constitutes a recurrence relation for $R_n(x)$ and therefore \tc{the} exceptional orthogonal polynomials $R_n(x)$ satisfy a higher-order recurrence relation \tc{and form generalized eigenvectors of the corresponding non-selfadjoint operator, which allows to do spectral analysis of the underlying semi-infinite band matrices. Another message is that the approach can be applied to other similar classes of biorthogonal polynomials.  Note that this approach was already implemented for some nonclassical orthogonalities.} For instance, a discrete Darboux transformation can lead to Sobolev orthogonal polynomials \cite{DM14}, \cite{DGM14} as well as to indefinite orthogonal polynomials \cite{DD10}. As for the exceptional part, skipping polynomials of certain degrees is natural for discrete Darboux transformations as can be seen on indefinite orthogonal polynomials \cite{DD04}, \cite{DD07}, \cite{KL79}.
\end{remark}

\tcnew{A different technique leads to a family of recurrence relations analogous to what was obtained in \cite{GKKM} but it does not immediately reveal the bond between the spectral properties unlike the commutation relation given in Theorem \ref{DarbouxForR}.
\begin{proposition}\label{Mrecrel}
    Let $\psi$ be a monic polynomial of degree $k$ such that
    \[
    \psi'(i)=0,\quad \psi'(-i)=0.
    \]
 Then 
 \[
 \psi(x)R_n(x)=\sum_{m=n-k}^{n+k}c_{n,m}R_m(x)
 \]
 for some constants $c_{n,m}$.
\end{proposition}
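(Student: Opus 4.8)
The plan is to exploit the structural fact, established in the proof of Theorem \ref{thm:sn_f}, that a polynomial $f$ of degree $d$ can be written as $f=\sum_{j=0}^{d-2}a_j R_j$ precisely when $f'(i)=f'(-i)=0$. In other words, $\{R_m\}_{m\geq 0}$ is a graded basis (with $\deg R_m=m+2$ and $R_0=1$) of the subspace $V=\{p\in\mathbb{C}[x]:p'(i)=p'(-i)=0\}$. So the whole proposition amounts to showing that $\psi R_n\in V$ and then pinning down which $R_m$ can occur.

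First I would verify that $\psi R_n\in V$. By the product rule $(\psi R_n)'=\psi' R_n+\psi R_n'$, and evaluating at $x=\pm i$ both summands vanish: $\psi'(\pm i)=0$ by hypothesis and $R_n'(\pm i)=0$ by \eqref{eq:Rn'(i)=0}. Since $\deg(\psi R_n)=k+(n+2)=(n+k)+2$, the expansion result quoted above immediately gives
\[
\psi(x)R_n(x)=\sum_{m=0}^{n+k}c_{n,m}R_m(x)
\]
for suitable constants $c_{n,m}$, with $c_{n,n+k}=1$ by comparing leading coefficients. This already delivers the upper end $m\leq n+k$ of the claimed band (with the convention that terms of negative index are simply absent when $n<k$).

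It remains to produce the lower end $m\geq n-k$, and this is the only step that needs more than a degree count. Here I would invoke the orthogonality of $\{R_m\}$ with respect to $\frac{d\mu(x)}{(1+x^2)^2}$ from Theorem \ref{thrm:Rn}. Pairing the expansion against $R_m$ gives
\[
c_{n,m}\int_{-\infty}^{\infty}R_m^2(x)\frac{d\mu(x)}{(1+x^2)^2}=\int_{-\infty}^{\infty}\psi(x)R_n(x)R_m(x)\frac{d\mu(x)}{(1+x^2)^2}.
\]
The key observation is that $\psi R_m$ also lies in $V$, by the same product-rule computation using $R_m'(\pm i)=0$; hence $\psi R_m=\sum_{l=0}^{m+k}d_l R_l$. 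Reading the integral on the right as $\int_{-\infty}^{\infty}R_n(x)\,(\psi R_m)(x)\frac{d\mu(x)}{(1+x^2)^2}$ and substituting this expansion, orthogonality forces the integral to vanish whenever $n>m+k$. Therefore $c_{n,m}=0$ for $m<n-k$, and the two bounds together yield the banded relation.

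The only genuine obstacle is this lower bound $m\geq n-k$; the upper bound is a mere degree comparison. The mild point to be careful about is that multiplication by $\psi$ preserves the subspace $V$, which is exactly what lets one shift the factor $\psi$ from $R_n$ onto $R_m$ inside the integral and re-expand, thereby upgrading the one-sided degree estimate into a genuine two-sided band.
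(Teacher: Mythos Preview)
Your proof is correct and follows essentially the same route as the paper: expand $\psi R_n$ in the $R_m$ basis via the structural fact from Theorem~\ref{thm:sn_f} to get the upper cutoff, then shift $\psi$ onto $R_m$, re-expand, and invoke the orthogonality of Theorem~\ref{thrm:Rn} to obtain the lower cutoff. The only differences are cosmetic---you are somewhat more explicit about the product-rule verification and about the monic leading coefficient.
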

\begin{proof}
Since $f(x)=\psi(x)R_n(x)$ satisfies $f'(i)=f'(-i)=0$, using the reasoning given in the proof of Theorem \ref{thm:sn_f}, we see that 
\[
\psi(x)R_n(x)=\sum_{m=0}^{n+k}c_{n,m}R_m(x).
\]
Reiterating the argument for $\psi(x)R_m(x)$ and using the orthogonality of $R_n$, we see that
\[
\int_{-\infty}^{\infty}\psi(x)R_n(x)R_m(x)\frac{d\mu(x)}{(1+x^2)^2}=\int_{-\infty}^{\infty}R_n(x)\left(\sum_{l=0}^{m+k}c_{m,l}R_l(x)\right)\frac{d\mu(x)}{(1+x^2)^2}=0
\]
provided that $m<n-k$, which yields the desired relation.
\end{proof}
In particular, if $\psi(x)=1+x+\frac{x^3}{3}$ then Proposition \ref{Mrecrel} gives a $7$-term recurrence relation while Proposition \ref{DarbouxForR} produces $9$-term recurrence relation for $\psi(x)=(1+x^2)^2$.}

\begin{section}{Examples}

Here we firstly show that the DEK polynomials fit into the general scheme that we presented and then apply the construction to the Chebyshev polynomials, which leads to a new family of DEK-type orthogonal polynomials that does not coincide with the known families.

\begin{subsection}{DEK Polynomials}
As was pointed out at the beginning the DEK polynomials satisfy the relations \cite{D92}
\begin{align}\label{eq:Fn}F_n(x)&=(x^3+3x)He_{n-1}(x)-(n-1)(1+x^2)He_{n-2}(x)\\
&=He_{n+2}(x)+2(n+2)He_n(x)+(n+2)(n-1)He_{n-2}(x).
\end{align} 
Next, note that from equation \eqref{eq:Fn}, one has that 
\[
F'(i)=0
\]
for all $n$. In this case, we know the coefficients from the start but, in principle, Proposition \ref{existence} could independently establish existence of the coefficients and the procedure given in the proof would lead to $A_n=2(n+2)$ and $B_n=(n+2)(n-1)$. Then, the orthogonality would follow from Theorem \ref{thrm:Rn}. Since $F_n(i)=F_n'(-i)=0$, as was already mentioned, the classic Christoffel transformation cannot be applied. However, Theorem \ref{GenChristoffel} allows us to obtain the original $He_n(x)$ from the $F_n(x)$. We show this below for $n=1,2,\dots, 5$.
\begin{example} Let $S_n(x)$ be the polynomials defined by
\begin{equation*}S_n(x):= \frac{1}{c_n(1+x^2)^2}\begin{vmatrix}F_n(i)&F_{n+1}(i)&F_{n+2}(i)\\
F_n(-i)&F_{n+1}(-i)&F_{n+2}(-i)\\
F_n(x)&F_{n+1}(x)&F_{n+2}(x)
\end{vmatrix}
\end{equation*} where
\begin{equation*}c_n=\begin{vmatrix}
F_n(i)&F_{n+1}(i)\\
F_n(-i) & F_{n+1}(-i)
\end{vmatrix}.
\end{equation*} 

Then applying Theorem \ref{GenChristoffel} to the DEK polynomials for $n=0,1,2,3,4$, we have 
\begin{align*}
\phi(x)He_0(x)&=\phi(x)S_0(x)=F_2(x)+2F_0(x)\\
\phi(x)He_1(x)&=\phi(x)S_1(x)=F_3(x)+2F_1(x)\\
\phi(x)He_2(x)&=\phi(x)S_2(x)=F_4(x)+4F_2(x)\\
\phi(x)He_3(x)&=\phi(x)[S_3(x)+S_1(x)]=F_5(x)+6F_3(x)+2F_1(x)\\
\phi(x)He_4(x)&=\phi(x)[S_4(x)+\frac{3}{2}S_2(x)]= F_6(x)+8F_4(x)+6F_2(x)
\end{align*}
\end{example}
We list the first five $S_n(x)$ for the reader's convenience. 
\begin{align*}
S_0(x)&=1\\
    S_1(x)&=x\\
    S_2(x)&=x^2-1\\
    S_3(x)&=x^3-4x\\
    S_4(x)&=x^4-\frac{15}{2}x^2+\frac{9}{2}
\end{align*} 

One may ask if $\{S_n(x)\}$ is an orthogonal polynomials system.
If this were the case then 
$\{S_n(x)\}$ would satisfy the three-term recurrence relation
\[xS_n(x)=S_{n+1}(x) + \alpha_nS_n(x)+\beta_nS_{n-1}(x).
\] 
However, if we consider when $n=3$, then we cannot find coefficients $\alpha_3$ and $\beta_3$ such that this three-term recurrence relation is satisfied. Thus, the polynomials $S_n(x)$ cannot be an OPS with respect to any quasi-definite linear functional.
\begin{remark}
It was shown in \cite{A} that the DEK polynomials are complete in \newline $L^2\left(\frac{e^{-x^2/2}}{(1+x^2)^2}dx, (-\infty, \infty) \right).$
\end{remark}
\end{subsection}

\begin{subsection}{Chebyshev Polynomials}
We now take the case $P_n(x)=\hat{T}_n(x)$ where $\hat{T}_n(x)$ is the monic Chebyshev polynomial \tcnew{of the first kind} of degree $n$ and $d\mu(x) = (1-x^2)^{-1/2}dx$.
\begin{theorem}\label{Thm:ExCheb} Let $\hat{T}_n(x)$ denote the monic Chebyshev polynomial \tcnew{of the first kind} of degree $n$.  Then for $n \geq 1$, there exist real numbers $A_n$ and  $B_n$ such that $\mathcal{R}_n(x)=\hat{T}_{n+2}(x)+A_n\hat{T}_n(x)+B_n\hat{T}_{n-2}(x)$ is a monic polynomial of degree $n+2$ satisfying\tcnew{,}
\begin{enumerate}
    \item $\mathcal{R}'_n(i)=0$ for all $n=1,2,\dots$\tcnew{,}
    \item $\int_{-1}^1  \frac{\mathcal{R}_0(x)\mathcal{R}_n(x)}{\sqrt{1-x^2}(1+x^2)^2}dx=0$ for all $n=1,2,\dots$\tcnew{,}
    \item $\int_{-1}^1  \frac{\mathcal{R}_1(x)\mathcal{R}_n(x)}{\sqrt{1-x^2}(1+x^2)^2}dx=0$ for all $n \geq 2$
\end{enumerate}
where $\mathcal{R}_0(x):=1$.
\end{theorem}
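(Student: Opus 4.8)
The plan is to recognize that Theorem~\ref{Thm:ExCheb} asserts exactly that the construction of Section~2 applies to the monic Chebyshev polynomials $\hat T_n$ with the symmetric measure $d\mu(x)=(1-x^2)^{-1/2}dx$ on $[-1,1]$. Consequently, by Proposition~\ref{existence}, it suffices to verify the nonvanishing of the two families of $2\times2$ determinants: $\det\mathcal E_n\neq0$ for even $n\ge 2$ and $\det\mathcal O_n\neq 0$ for odd $n\ge 3$. The case $n=1$ is handled directly as in the proof of Proposition~\ref{existence}: since $\hat T_3(x)=x^3-\tfrac34 x$ is odd, the conditions force $A_1=\tfrac{15}{4}$ and hence $\mathcal R_1(x)=x^3+3x$, which automatically satisfies $\mathcal R_1'(i)=3(i^2+1)=0$ and \eqref{eq:integral_R0} by oddness.

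The entries of $\mathcal E_n$ and $\mathcal O_n$ are of two types, both of which admit closed forms for Chebyshev polynomials. First, the interpolation data $\hat T_m'(i)=m\,U_{m-1}(i)/2^{m-1}$ is computed from $U_{m-1}(x)=\frac{(x+\sqrt{x^2-1})^m-(x-\sqrt{x^2-1})^m}{2\sqrt{x^2-1}}$; at $x=i$ one has $\sqrt{x^2-1}=i\sqrt2$, so $\hat T_m'(i)$ is expressed through $(1\pm\sqrt2)^m$. Second, the moments $J_m:=\int_{-1}^1\frac{\hat T_m(x)}{\sqrt{1-x^2}(1+x^2)^2}dx$ and the $\mathcal R_1$-weighted moments $K_m:=\int_{-1}^1\frac{\mathcal R_1(x)\hat T_m(x)}{\sqrt{1-x^2}(1+x^2)^2}dx$ are obtained from the Stieltjes transform of the Chebyshev weight, $\int_{-1}^1\frac{T_m(x)}{\sqrt{1-x^2}(z-x)}dx=\frac{\pi(z-\sqrt{z^2-1})^m}{\sqrt{z^2-1}}$, by decomposing $\frac{1}{(1+x^2)^2}$ into partial fractions over the double poles at $z=\pm i$ and evaluating this transform together with its $z$-derivative at $z=\pm i$; for $K_m$ one additionally uses $\frac{x^3+3x}{(1+x^2)^2}=\frac{x}{1+x^2}+\frac{2x}{(1+x^2)^2}$ to reduce to the same transforms.

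I would then exploit the parity of $\hat T_m$ to collapse each determinant to a single scalar inequality. For even $n$, $\hat T_m'(i)$ is purely imaginary while $J_m$ is real, so $\det\mathcal E_n=\hat T_n'(i)\hat T_{n-2}'(i)\,(r_{n-2}-r_n)$ with $r_m:=J_m/\hat T_m'(i)$, and the analogous identity with $K_m$ in place of $J_m$ holds for $\det\mathcal O_n$ in the odd case (where now $\hat T_m'(i)$ is real). A direct simplification of the closed forms yields, for even $m$, $r_m=i\,\tilde r_m$ with $\tilde r_m=\frac{\pi(m\sqrt2+3)}{2m\big[(1+\sqrt2)^{2m}-1\big]}$. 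The nonvanishing $\det\mathcal E_n\neq0$ is then equivalent to the injectivity of $m\mapsto\tilde r_m$, which follows from strict monotonicity: the denominator grows exponentially in $m$ while the numerator grows only linearly, so $\tilde r_{m+2}/\tilde r_m<1$ after the elementary bound $(1+\sqrt2)^{2m+4}-1>(1+\sqrt2)^4\big((1+\sqrt2)^{2m}-1\big)$. The odd case is settled the same way once the corresponding closed form for $K_m/\hat T_m'(i)$ is shown to be strictly monotone.

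The main obstacle will be twofold: carrying out the Stieltjes-transform evaluation at the double poles $z=\pm i$ cleanly, keeping careful track of the branch of $\sqrt{z^2-1}$, which must be the decaying one so that $z-\sqrt{z^2-1}=i(1-\sqrt2)$ at $z=i$; and then establishing the monotonicity for \emph{every} admissible $n$ rather than merely asymptotically. The latter is the crux, since once the ratios are reduced to the explicit form above, the competition between the exponential factor $(1+\sqrt2)^{2m}$ and the low-degree polynomial factors must be controlled by an exact elementary estimate, as a purely asymptotic argument would leave finitely many cases unresolved.
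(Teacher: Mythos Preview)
Your proposal is correct and follows essentially the paper's route: reduce via Proposition~\ref{existence} to the nonvanishing of $\det\mathcal E_n$ (even $n$) and $\det\mathcal O_n$ (odd $n$), evaluate the required Chebyshev integrals through partial fractions and the Stieltjes transform $\int_{-1}^1\frac{T_m(x)}{(z-x)\sqrt{1-x^2}}\,dx$ and its $z$-derivative at $z=\pm i$, and finish with an explicit inequality involving $(1+\sqrt2)^{2m}$. The only difference is in how the final step is packaged: the paper assumes $\det\mathcal E_n=0$, uses the product identities $T_{n-2}(i)U_{n-1}(i)=\tfrac12(U_{2n-3}(i)+2i)$ and $T_n(i)U_{n-3}(i)=\tfrac12(U_{2n-3}(i)-2i)$ to collapse everything to the impossible equation $\tfrac{3(n-1)}{2\sqrt2}+\tfrac{n(n-2)}{2}+\tfrac98=\tfrac38(3-2\sqrt2)^{n-1}$ (and an analogous one for odd $n$), whereas you factor the determinant as $\hat T_n'(i)\hat T_{n-2}'(i)(r_{n-2}-r_n)$ and argue strict monotonicity of $r_m$; your closed form $\tilde r_m=\tfrac{\pi(m\sqrt2+3)}{2m[(1+\sqrt2)^{2m}-1]}$ is correct (the simplification hinges on the same cancellation $\tfrac{3\sqrt2+2m}{\sqrt2}=3+m\sqrt2$ that makes the paper's contradiction equation so clean), and your monotonicity bound is valid for all even $m\ge2$, not merely asymptotically.
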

\begin{proof} 

First, it should be noted that if $A_n, B_n \in \mathbb{R}$ and $\mathcal{R}_n'(i)=0$ then, it must be the case that $\mathcal{R}_n'(-i)=0$. 

We can explicitly find $\mathcal{R}_1(x)$ since $\mathcal{R}_1(x)=
\hat{T}_3(x)+A_1\hat{T}_1(x)$, is an odd, monic polynomial of degree 3 which satisfies that $\mathcal{R}'_1(i)=0$. Thus $\mathcal{R}_1(x)=x^3+3x$ and hence $A_1=15/4$ and $B_1$ can be arbitrary. Notice that $\mathcal{R}_1(x)$ satisfies $\int_{-1}^1 \frac{\mathcal{R}_0(x)\mathcal{R}_1(x)}{\sqrt{1-x^2}(1+x^2)^2}dx=0$ since $\frac{\mathcal{R}_0(x)\mathcal{R}_1(x)}{\sqrt{1-x^2}(1+x^2)^2}$ is an odd function.

Now for $n \geq 2$, by Proposition \ref{existence}, it suffices to show that that for for $n \geq 2$ even, det$(\mathcal{E}_n)\neq 0$ and for $n \geq 3$ odd, det$(\mathcal{O}_n) \neq 0$
where \[\mathcal{E}_n = \begin{pmatrix}
\hat{T}'_n(i)&\hat{T}'_{n-2}(i)\\
\int_{-1}^1 \frac{\hat{T}_n(x)}{\sqrt{1-x^2}(1+x^2)^2}dx  & \int_{-1}^1 \frac{\hat{T}_{n-2}(x)}{\sqrt{1-x^2}(1+x^2)^2}dx
\end{pmatrix}\] and  \[\mathcal{O}_n= \begin{pmatrix}
\hat{T}'_n(i)&\hat{T}'_{n-2}(i)\\
\int_{-1}^1 \frac{\mathcal{R}_1(x)\hat{T}_n(x)}{{\sqrt{1-x^2}(1+x^2)^2}} dx & \int_{-1}^1 \frac{\mathcal{R}_1(x)\hat{T}_{n-2}(x)}{{\sqrt{1-x^2}(1+x^2)^2}} dx
\end{pmatrix}.\] In fact, since $\hat{T}_n(x) = \frac{1}{2^{n-1}}T_n(x)$ for $n \geq 2$, where $T_n(x)$ is non-monic Chebyshev polynomial of the first kind of degree $n$, we can replace $\hat{T}$ with $T$ in $\mathcal{E}_n$ and $\mathcal{O}_n$ and show the corresponding determinants are non-zero. Thus in what follows, we will let $\mathcal{O}_n$ and $\mathcal{E}_n$ denote the matrices with entries given by the non-monic Chebyshev polynomials \tcnew{of the first kind}.
\\
Let $n\geq 2$ be even. Then using partial fraction decomposition, we have
\begin{align*}\int_{-1}^1 \frac{T_n(x)}{\sqrt{1-x^2}(1+x^2)^2}dx&=\frac{i}{4}\int_{-1}^1 \frac{T_n(x)}{(x+i)\sqrt{1-x^2}}dx-\frac{i}{4}\int_{-1}^1 \frac{T_n(x)}{(x-i)\sqrt{1-x^2}}dx\\&-\frac{1}{4}\int_{-1}^1 \frac{T_n(x)}{(x+i)^2\sqrt{1-x^2}}dx-\frac{1}{4}\int_{-1}^1 \frac{T_n(x)}{(x-i)^2\sqrt{1-x^2}}dx.
\end{align*}
Notice that for $x \notin (-1,1)$,
\begin{equation}\label{eq:main_integral}
\begin{split}
    \int_{-1}^1 \frac{T_n(y)}{(y-x)\sqrt{1-y^2}}dy&= \int_{-1}^1 \frac{T_n(y)-T_n(x)}{y-x}\frac{dy}{\sqrt{1-y^2}}+T_n(x)\int_{-1}^1\frac{1}{y-x}\frac{dy}{\sqrt{1-y^2}}\\
    &=\pi U_{n-1}(x)+T_n(x)\int_{-1}^1\frac{1}{y-x}\frac{dy}{\sqrt{1-y^2}}
    \end{split}
\end{equation} where $U_{n}(x)$ is the $n$-th degree Chebyshev polynomial of the second kind. Differentiating, we see that
\tcnew{,}
\begin{equation}\label{eq:main_integral2}
\begin{split}
    \int_{-1}^1 \frac{T_n(y)}{(y-x)^2\sqrt{1-y^2}}dy &=\pi U'_{n-1}(x)+T'_n(x)\int_{-1}^1\frac{1}{y-x}\frac{dy}{\sqrt{1-y^2}}\\
    &+T_n(x)\int_{-1}^1\frac{1}{(y-x)^2}\frac{dy}{\sqrt{1-y^2}}\tcnew{.}
    \end{split}
    \end{equation}
Thus, using the fact that for $n$ even, \begin{align*}T_n(-i)&=T_n(i)\\
T'_n(-i)&=-T_n(i)\\ U_{n-1}(i)&=-U_{n-1}(i)\\ U'_{n-1}(-i)&=U'_{n-1}(i)
\end{align*}and also using the identities
\begin{equation}\label{eq:identity1}T'_n(i)=nU_{n-1}(i)
\end{equation}
and
\begin{equation}\label{eq:identity2}
    U'_{n-1}(i)=\frac{nT_n(i)-iU_{n-1}(i)}{-2}
\end{equation}
we have that
\begin{equation}\label{eq:integral}\int_{-1}^1\frac{T_n(x)}{(1+x^2)^2\sqrt{1-x^2}}dx = \pi\left(\frac{3+n\sqrt{2}}{4\sqrt{2}}\right)T_n(i)-i\pi\left(\frac{3\sqrt{2}+2n}{4\sqrt{2}} \right)U_{n-1}(i).
\end{equation}
Now, assume by way of contradiction that $\det(\mathcal{E}_n)=0$. Then\tcnew{,}
\[T'_n(i)\int_{-1}^1\frac{T_{n-2}(x)}{(1+x^2)^2\sqrt{1-x^2}}dx=T'_{n-2}(i)\int_{-1}^1\frac{T_n(x)}{(1+x^2)^2\sqrt{1-x^2}}dx
\]so substituting in \eqref{eq:integral} and simplifying, we must have 
\begin{equation}\label{eq:even}
\begin{split}
    n\left( \frac{3+(n-2)\sqrt{2}}{4\sqrt{2}}\right)
    U_{n-1}(i)T_{n-2}(i)&-(n-2)\left( \frac{3+n\sqrt{2}}{4\sqrt{2}}\right)U_{n-3}(i)T_n(i)=\\
    &=\frac{3i}{2}U_{n-1}(i)U_{n-3}(i)\tcnew{.}
    \end{split}
\end{equation} Using the fact that
\begin{equation}\label{eq:identity}
\begin{split}
T_{n-2}(i)U_{n-1}(i)&=\frac{1}{2}\left(U_{2n-3}(i)+2i \right) \text{ and }\\
T_n(i)U_{n-3}(i)&=\frac{1}{2}\left(U_{2n-3}(i)-2i\right)
\end{split}
\end{equation} equation \eqref{eq:even} becomes
\begin{equation}\label{eq:even_2}
    \frac{3}{4\sqrt{2}}U_{2n-3}(i)+\frac{3(n-1)i}{2\sqrt{2}}+\frac{n(n-2)i}{2}= \frac{3i}{2}U_{n-1}(i)U_{n-3}(i).
\end{equation}
Note that 
\[U_n(i)=\frac{(i+\sqrt{2}i)^{n+1}-(i-\sqrt{2}i)^{n+1}}{2\sqrt{2}i}
\] thus, for $n$ even,
\[U_{n-1}(i)U_{n-3}(i)= \frac{1}{8}(3+2\sqrt{2})^{n-1}+\frac{1}{8}(3-2\sqrt{2})^{n-1}-\frac{3}{4}
\] and
\[U_{2n-3}(i)=\frac{-(3+2\sqrt{2})^{n-1}+(3-2\sqrt{2})^{n-1}}{2\sqrt{2}i}
\] so that equation \eqref{eq:even_2} is equivalent to
\begin{equation}\label{eq:even_3}
    \frac{3(n-1)}{2\sqrt{2}}+\frac{n(n-2)}{2}+\frac{9}{8}=\frac{3}{8}(3-2\sqrt{2})^{n-1}.
\end{equation} 
But since $\frac{9}{8}>1$ and all the terms on the left-hand side are positive, we see that the left-hand side is strictly greater than 1. However, $3-2\sqrt{2}<1$ hence $(3-2\sqrt{2})^{n-1}<1$ for all $n\geq 2$ which shows that the right-hand side of \eqref{eq:even_3} is strictly less than 1 which is a contradiction. Therefore, $\det(\mathcal{E}_n)\neq 0$ for $n$ even, $n \geq 2$.

Now let $n$ be odd, $n \geq 1$. Using partial fraction decomposition as before, we have that
\begin{align*}
    \int_{-1}^1\frac{\mathcal{R}_1(x)T_n(x)}{\sqrt{1-x^2}(1+x^2)^2}dx&=\frac{1}{2}\int_{-1}^1\frac{\tcnew{\mathcal{R}_1(x)}T_n(x)}{(x+i)\sqrt{1-x^2}}dx+\frac{1}{2}\int_{-1}^1\frac{\tcnew{\mathcal{R}_1(x)}T_n(x)}{(x-i)\sqrt{1-x^2}}dx\\
    &+\frac{i}{2}\int_{-1}^1\frac{\tcnew{\mathcal{R}_1(x)}T_n(x)}{(x+i)^2\sqrt{1-x^2}}dx-\frac{i}{2}\int_{-1}^1\frac{\tcnew{\mathcal{R}_1(x)}T_n(x)}{(x-i)^2\sqrt{1-x^2}}dx
\end{align*}
so using equations \eqref{eq:main_integral},  \eqref{eq:main_integral2}, \eqref{eq:identity1}, \eqref{eq:identity2} and the fact that for $n$ odd, 
\begin{align*}
    T_n(-i)&=-T_n(i)\\
    T'_n(-i)&=T'_n(i)\\
    U_{n-1}(-i)&=U_{n-1}(i)\\
    U'_{n-1}(-i)&=-U'_{n-1}(i)
\end{align*} we see that
\begin{align*}
    \int_{-1}^1\frac{\mathcal{R}_1(x)T_n(x)}{\sqrt{1-x^2}(1+x^2)^2}dx&=\left( \frac{3i\pi}{2\sqrt{2}}+\frac{i\pi n}{2}\right)T_n(i) +\left( \frac{\pi n}{\sqrt{2}}+\frac{3\pi}{2}\right)U_{n-1}(i).
\end{align*}
As before, assume that $\det(\mathcal{O}_n)=0$. Then 
\[T'_n(i)\int_{-1}^1\frac{(x^3+3x)T_{n-2}(x)}{(1+x^2)^2\sqrt{1-x^2}}dx=T'_{n-2}(i)\int_{-1}^1\frac{(x^3+3x)T_{n}(x)}{(1+x^2)^2\sqrt{1-x^2}}dx
\] hence,
\begin{equation}
    \begin{split}
    n\left(\frac{3i\pi}{2\sqrt{2}}+\frac{i\pi(n-2)}{2} \right)U_{n-1}(i)T_{n-2}(i)
    &-(n-2)\left(\frac{3i\pi}{2\sqrt{2}}+\frac{i\pi n}{2} \right)U_{n-3}(i)T_n(i)=\\
    &=3\pi U_{n-1}(i)U_{n-3}(i).
    \end{split}
\end{equation}
By equation \eqref{eq:identity}, this simplifies to 
\begin{equation}\label{eq:odd}
    \frac{3i}{2\sqrt{2}}U_{2n-3}(i)-\frac{3(n-1)}{\sqrt{2}}-n(n-2)=-3U_{n-1}(i)U_{n-3}(i).
\end{equation}
Since $n$ is odd, 
\[U_{n-1}(i)U_{n-3}(i) = \frac{-1}{8}\left((3+2\sqrt{2})^{n-1}-(3-2\sqrt{2})^{n-1}+6 \right)
\] and
\[U_{2n-3}(i)=\frac{(3+2\sqrt{2})^{n-1}-(3-2\sqrt{2})^{n-1}}{2\sqrt{2}i}
\] so that equation \eqref{eq:odd} is equivalent to
\begin{equation}
    \frac{-3(n-1)}{\sqrt{2}}-(n-2)-\frac{9}{4}=\frac{3}{4}(3-2\sqrt{2})^{n-1}. 
\end{equation}
Clearly, for $n \geq 1$, the left-hand side is less than zero but the right-hand side is greater than 0, hence $\det(\mathcal{O}_n)\neq 0$ for any $n$ odd $n\geq 1$.
\end{proof}

We list the first few $\mathcal{R}_n(x)$ below:
\begin{align*}
    \mathcal{R}_0(x)&=1\\
    \mathcal{R}_1(x)&=x^3+3x\\
    \mathcal{R}_2(x)&=x^4+2x^2+1-\frac{4\sqrt{2}}{3}\\
    \mathcal{R}_3(x)&=x^5+\frac{41-5\sqrt{2}}{28}x^3+\frac{-17-15\sqrt{2}}{28}x\\
    \mathcal{R}_4(x)&=x^6-\frac{3(-859+192\sqrt{2})}{2402}x^4-\frac{2052+1152\sqrt{2}}{2402}x^2-\frac{7859+5592\sqrt{2}}{2402}
\tcnew{.}    
\end{align*}

\tcnew{Next, using Mathematica one can check the behavior of the zeros of $\mathcal{R}_n$.}
      \begin{figure}[h!]\centering
\begin{minipage}{.5\textwidth}
  \centering
   \includegraphics[scale=0.5]{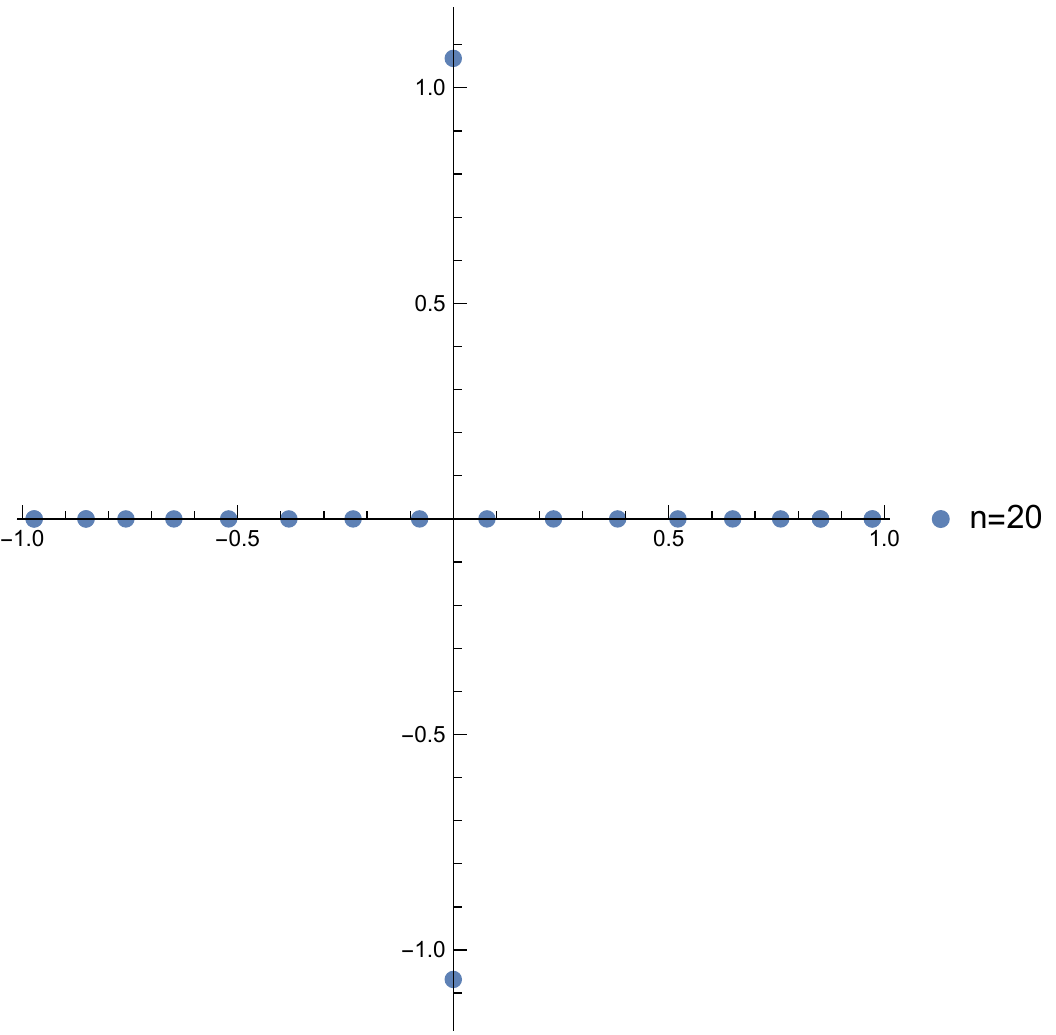}
\end{minipage}%
\begin{minipage}{.5\textwidth}
  \centering
  \includegraphics[scale=0.5]{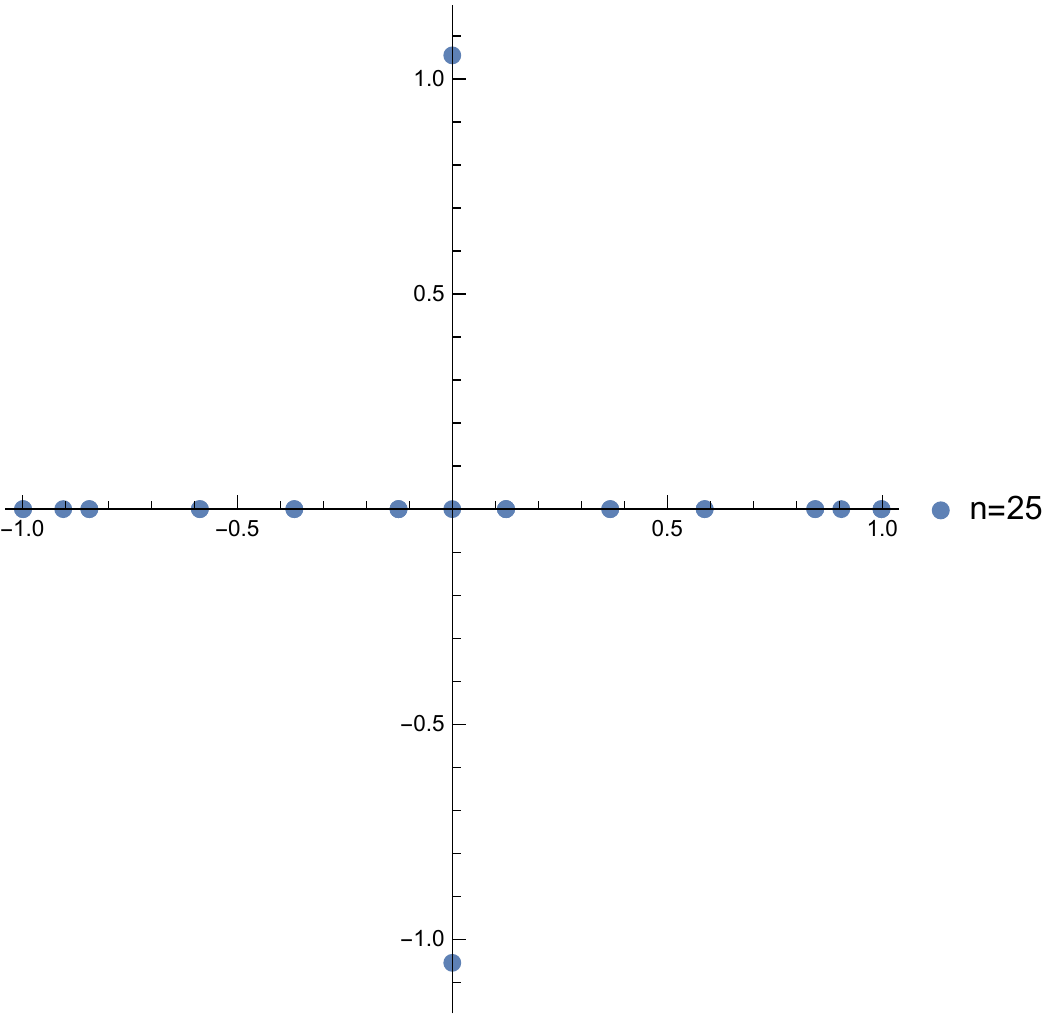}
\end{minipage}
\caption{\tcnew{The behavior of the zeros of $\mathcal{R}_n(x)$ for $n=20$ and $n=25$}. \tcnew{Note that for $n=20$, there are $4$ real zeros of multiplicity two and for} \tcnew{$n=25$, there are $7$ with multiplicity two and two with} \tcnew{multiplicity $3$}.}
\label{fig:RZeros}
    \end{figure}
\tcnew{
\begin{remark}
    From Figure 1, one can see that the zeros of $\mathcal{R}_n$ behave similarly to zeros of exceptional Hermite, Jacobi, and Laguerre polynomials \cite{GMM}, \cite{KM15}. Besides, such a behavior is typical for indefinite orthogonal polynomials \cite{DD04}, \cite{DD07}. Since the link between general $R_n$ and indefinite orthogonal polynomials has been given earlier in this paper, it is natural to conjecture that similar situation takes place for general polynomials $R_n$.
\end{remark}
}
    
\begin{corollary} Let $\{\mathcal{R}_n(x)\}$ be the family of polynomials given in Theorem \ref{Thm:ExCheb}. Then $\{\mathcal{R}_n(x)\}$ is a family of exceptional orthogonal polynomials and $\mathcal{R}_n(i)\neq 0$ for any $n=0,1,2,\dots$.
\end{corollary}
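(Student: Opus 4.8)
The plan is to recognize that $\{\mathcal{R}_n(x)\}$ is precisely an instance of the general DEK-type construction of Section 2, specialized to the choice $P_n=\hat{T}_n$ and $d\mu(x)=(1-x^2)^{-1/2}\,dx$ on $(-1,1)$. Theorem \ref{Thm:ExCheb} has already done the essential verification: its items (1)--(3) are verbatim the defining conditions \eqref{eq:Rn'(i)=0}, \eqref{eq:integral_R0}, and \eqref{eq:integral_R1} for this weight. Since the Chebyshev measure is symmetric, is supported on the symmetric interval $(-1,1)$, and has finite moments, every standing hypothesis of the general framework is met, so the $\mathcal{R}_n$ are a legitimate realization of the $R_n$ from Section 2.

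Having made this identification, I would first invoke Theorem \ref{thrm:Rn} to conclude orthogonality, namely
\[
\int_{-1}^1 \frac{\mathcal{R}_n(x)\mathcal{R}_m(x)}{\sqrt{1-x^2}\,(1+x^2)^2}\,dx = 0 \quad (n\neq m),
\]
with the integral nonzero for $m=n$ because $\mathcal{R}_n^2(x)/(1+x^2)^2$ is positive on $(-1,1)$. Combined with the fact that $\deg \mathcal{R}_n = n+2$ while $\mathcal{R}_0\equiv 1$, the degree sequence is $\{0,3,4,5,\dots\}$, so degrees $1$ and $2$ are absent. This missing-degree structure is exactly what makes the family exceptional, and together with the orthogonality relation it certifies that $\{\mathcal{R}_n(x)\}$ is a family of exceptional orthogonal polynomials.

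For the second assertion I would apply Proposition \ref{Rni}, whose only hypothesis---the existence of the family $\{\mathcal{R}_n\}$---is guaranteed by Theorem \ref{Thm:ExCheb}. That proposition yields $\mathcal{R}_n(i)\neq 0$ for all $n\geq 1$, and since $\mathcal{R}_0\equiv 1$ gives $\mathcal{R}_0(i)=1\neq 0$, we obtain $\mathcal{R}_n(i)\neq 0$ for every $n=0,1,2,\dots$.

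There is essentially no technical obstacle at the level of this corollary: all of the genuine work lies upstream, both in the general machinery (Theorem \ref{thrm:Rn} and Proposition \ref{Rni}) and in the delicate nonvanishing-determinant estimates carried out in the proof of Theorem \ref{Thm:ExCheb}. The only point demanding a moment's care is to confirm that the Chebyshev setting satisfies each general hypothesis---symmetry of $\mu$, finiteness of its moments, and positivity of the weight $1/(\sqrt{1-x^2}\,(1+x^2)^2)$ on the support---so that the cited results transfer without modification.
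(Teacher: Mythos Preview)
Your proposal is correct and follows essentially the same approach as the paper: both identify $\{\mathcal{R}_n\}$ as an instance of the general $\{R_n\}$ construction, cite Theorem~\ref{thrm:Rn} for orthogonality together with the missing degrees $1$ and $2$ to get the exceptional structure, and then invoke Proposition~\ref{Rni} for $\mathcal{R}_n(i)\neq 0$. Your write-up is slightly more explicit in checking the standing hypotheses and handling $n=0$ separately, but the argument is the same.
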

\begin{proof}
By Theorem \ref{Thm:ExCheb}, we know that the $\mathcal{R}_n(x)$ are polynomials of degree $n+2$\tcnew{,} thus, the sequence does not include a degree 1 or degree 2 polynomial. The orthogonality and \tcnew{the} fact that $\mathcal{R}_n(i)\neq 0$ for any $n$ follow from Theorem \ref{thrm:Rn} and Proposition \ref{Rni}, respectively.
\end{proof}
The results of Theorem \ref{Thm:ExCheb} allow us to apply Theorem \ref{GenChristoffel} to obtain the monic Chebyshev polynomials \tcnew{of the first kind} from the $\mathcal{R}_n(x)$.
\begin{corollary}
Let $\{\mathcal{R}_n(x)\}$ be the family of polynomials given in Theorem \ref{Thm:ExCheb} and let $\{\hat{T}_n(x)\}$ be the sequence of monic Chebyshev polynomials \tcnew{of the first kind}. Then for $\phi(x)=(1+x^2)^2$, we have
\begin{multline*}
    \phi(x)\hat{T}_n(x)= \\
= \left[ \frac{1}{c_n}\begin{vmatrix}\mathcal{R}_n(i)&\mathcal{R}_{n+1}(i)&\mathcal{R}_{n+2}(i)\\
\mathcal{R}_n(-i)&\mathcal{R}_{n+1}(-i)&\mathcal{R}_{n+2}(-i)\\
\mathcal{R}_n(x)&\mathcal{R}_{n+1}(x)&\mathcal{R}_{n+2}(x)
\end{vmatrix}+\frac{\rho_n}{c_{n-2}}\begin{vmatrix}\mathcal{R}_{n-2}(i)&\mathcal{R}_{n-1}(i)&\mathcal{R}_{n}(i)\\
\mathcal{R}_{n-2}(-i)&\mathcal{R}_{n-1}(-i)&\mathcal{R}_{n}(-i)\\
\mathcal{R}_{n-2}(x)&\mathcal{R}_{n-1}(x)&\mathcal{R}_{n}(x)
\end{vmatrix}\right]
\end{multline*}
where\tcnew{,}
\[c_n=\begin{vmatrix}
\mathcal{R}_n(i)&\mathcal{R}_{n+1}(i)\\
\mathcal{R}_n(-i) & \mathcal{R}_{n+1}(-i)
\end{vmatrix}
\]
and $\{\rho_n\}$ is a sequence of real numbers.
\end{corollary}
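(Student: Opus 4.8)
The plan is to recognize this corollary as a direct specialization of Theorem \ref{GenChristoffel} to the Chebyshev setting, so that almost all of the work has already been done. First I would fix the choices $P_n(x)=\hat{T}_n(x)$ and $d\mu(x)=(1-x^2)^{-1/2}\,dx$ and check that these meet the standing hypotheses: the measure $d\mu$ is supported on the symmetric interval $[-1,1]$ and is invariant under $x\mapsto -x$, so it is a symmetric measure, and the monic Chebyshev polynomials of the first kind $\hat{T}_n$ are exactly the associated monic orthogonal polynomial sequence. In particular $\hat{T}_n(-x)=(-1)^n\hat{T}_n(x)$, which is the parity required by Remark \ref{rmk:even_odd}.

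Next I would invoke Theorem \ref{Thm:ExCheb}, which constructs the polynomials $\mathcal{R}_n(x)=\hat{T}_{n+2}(x)+A_n\hat{T}_n(x)+B_n\hat{T}_{n-2}(x)$ and verifies precisely the three structural conditions $\mathcal{R}'_n(i)=0$, orthogonality of $\mathcal{R}_0$ to each later $\mathcal{R}_n$, and orthogonality of $\mathcal{R}_1$ to each $\mathcal{R}_n$ with $n\geq 2$, all relative to the weight $(1-x^2)^{-1/2}(1+x^2)^{-2}$. These are exactly hypotheses (1)--(3) of Theorem \ref{GenChristoffel} with $R_n$ replaced by $\mathcal{R}_n$. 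I would also record that the preceding corollary (through Proposition \ref{Rni}) guarantees $\mathcal{R}_n(i)\neq 0$ for all $n$, so that the determinants $c_n$ from \eqref{def:cn} are nonzero and every quotient appearing in the asserted formula is well posed.

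With the hypotheses in place, the formula follows at once by substituting $P_n=\hat{T}_n$ and $R_n=\mathcal{R}_n$ into the conclusion of Theorem \ref{GenChristoffel}; the quantities $c_n$ and $\rho_n$ are then just the instances of \eqref{def:cn} and \eqref{def:rho_n} assembled from the Chebyshev-based $\mathcal{R}_n$, which is why the statement only asserts that $\{\rho_n\}$ is a sequence of real numbers rather than exhibiting closed forms. Since nothing in Theorem \ref{GenChristoffel} uses any property of $\mu$ beyond its symmetry together with the three conditions, no new computation is needed. The main (and essentially the only) obstacle is therefore the matching of hypotheses; once Theorem \ref{Thm:ExCheb} has been established, the corollary is a transcription of the general formula.
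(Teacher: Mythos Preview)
Your proposal is correct and matches the paper's approach: the paper's proof is the single line ``This is a direct application of Theorem \ref{Thm:ExCheb},'' meaning precisely that Theorem \ref{Thm:ExCheb} supplies the hypotheses of Theorem \ref{GenChristoffel} in the Chebyshev case, after which the stated formula is read off. Your write-up simply makes explicit the hypothesis-matching that the paper leaves implicit.
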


\begin{proof}
This is a direct application of Theorem \ref{Thm:ExCheb}. 
\end{proof}
One may note that the $\rho_n$ are given by Theorem \ref{thm:Rho_n}\tcnew{,} where \newline $d\mu(x) = \frac{1}{\sqrt{1-x^2}}\chi_{[-1,1]}(x)dx$ and \[S_n(x)= \frac{1}{c_n\phi(x)}\begin{vmatrix}\mathcal{R}_n(i)&\mathcal{R}_{n+1}(i)&\mathcal{R}_{n+2}(i)\\
\mathcal{R}_n(-i)&\mathcal{R}_{n+1}(-i)&\mathcal{R}_{n+2}(-i)\\
\mathcal{R}_n(x)&\mathcal{R}_{n+1}(x)&\mathcal{R}_{n+2}(x)
\end{vmatrix}.
\] Below are the first few $S_n(x)$ corresponding to the Chebyshev polynomials:
\begin{align*}
    S_0(x)&=1\\
    S_1(x)&=x\\
    S_2(x)&=x^2-\frac{1}{2}\\
    S_3(x)&=x^3-\frac{23}{30}x\\
    S_4(x)&=x^4-\frac{49}{48}x^2+\frac{13}{96}\tcnew{.}
\end{align*}
One can also quickly check that, for example, when $n=2$, there are no such $\alpha$ and $\beta$ such that
\[xS_2(x)=S_3(x)+\alpha S_2(x)+\beta S_1(x)
\] thus the $S_n(x)$ corresponding to the Chebyshev polynomials \tcnew{of the first kind} do not form an orthogonal polynomial sequence with respect to any quasi-definite linear functional. 

Below we illustrate the modification of the Christoffel formula when applied to the $\mathcal{R}_n(x)$.
\begin{example} Applying Theorem \ref{GenChristoffel} to the $\mathcal{R}_n(x)$ for $n=0,1,2,3,4$, we have
\begin{align*}
\phi(x)\hat{T}_0(x)&=\phi(x)S_0(x)=\mathcal{R}_2(x)+\frac{4\sqrt{2}}{3}\mathcal{R}_0(x)\\
\phi(x)\hat{T}_1(x)&=\phi(x)S_1(x)=\mathcal{R}_3(x)-\frac{5}{12-4\sqrt{2}}\mathcal{R}_1(x)\\
\phi(x)\hat{T}_2(x)&=\phi(x)S_2(x)=\mathcal{R}_4(x)+\frac{9}{57-32\sqrt{2}}\mathcal{R}_2(x)\\
\phi(x)\hat{T}_3(x)&=\phi(x)\left[ S_3(x)+\frac{1}{60}S_1(x)\right]=\mathcal{R}_5(x)+\frac{50+29\sqrt{2}}{100}\mathcal{R}_3(x)+\frac{3+\sqrt{2}}{336}\mathcal{R}_1(x)\\
\phi(x)\hat{T}_4(x)&=\phi(x)\left[S_4(x)+\frac{1}{48}S_2(x)\right]=\mathcal{R}_6(x)+\frac{2(32-27\sqrt{2})}{832-597\sqrt{2}}\mathcal{R}_4(x)+\\&+\frac{3(57+32\sqrt{2})}{19216}\mathcal{R}_2(x)\tcnew{.}
\end{align*}
\end{example}

\begin{theorem}\label{thm:complete}
The DEK-type polynomials $\mathcal{R}_n(x)$ corresponding to the monic \newline Chebyshev polynomials are complete in $L^2\left( \frac{dx}{\sqrt{1-x^2}(1+x^2)^2}, [-1,1]\right)$.
\end{theorem}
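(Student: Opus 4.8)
The plan is to prove completeness by showing that the closed linear span of $\{\mathcal{R}_n\}_{n\ge0}$ is all of $H:=L^2\!\left(\frac{dx}{\sqrt{1-x^2}(1+x^2)^2},[-1,1]\right)$; equivalently, that any $f\in H$ with $\int_{-1}^1 f\mathcal{R}_n\,\frac{dx}{\sqrt{1-x^2}(1+x^2)^2}=0$ for every $n$ must vanish. First I would record two preliminary facts. Since $1\le(1+x^2)^2\le4$ on $[-1,1]$, the weight is comparable to the Chebyshev weight $1/\sqrt{1-x^2}$, so $H$ coincides, with an equivalent norm, with the classical space $L^2(dx/\sqrt{1-x^2})$; in particular polynomials are dense in $H$. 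Second, by Theorem~\ref{Thm:ExCheb} each $\mathcal{R}_n$ obeys $\mathcal{R}_n'(i)=\mathcal{R}_n'(-i)=0$ and has degree $n+2$ (with $\mathcal{R}_0=1$); since these are $n+1$ polynomials of distinct degrees lying in the codimension-two space $V:=\{p:p'(i)=p'(-i)=0\}$, and the degree-$\le n+2$ part of $V$ also has dimension $n+1$, a dimension count gives $\mathrm{span}\{\mathcal{R}_n\}_{n\ge0}=V$.

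Next I would run the standard finite-codimension argument. Suppose $f\in H$ is orthogonal to every $\mathcal{R}_n$. Then the functional $\Lambda(p):=\int_{-1}^1 f(x)p(x)\,\frac{dx}{\sqrt{1-x^2}(1+x^2)^2}$, which is bounded on polynomials because $|\Lambda(p)|\le\|f\|_H\|p\|_H$, vanishes on all of $V$. The map $p\mapsto p'(i)$ sends $\mathbb{C}[x]$ onto $\mathbb{C}$ with kernel exactly $V$ (already $x\mapsto1$ and $x^2\mapsto 2i$ exhaust the target), so any linear functional annihilating $V$ is a combination of $p\mapsto p'(i)$ and $p\mapsto p'(-i)$; hence there are constants $\mu,\nu$ with
\[
\Lambda(p)=\mu\,p'(i)+\nu\,p'(-i)\qquad\text{for all } p.
\]
Everything therefore reduces to showing that no nonzero such combination is bounded in the $H$-norm.

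The crux, and the step I expect to be the main obstacle, is this boundedness-versus-growth comparison, which I would settle using the non-monic Chebyshev polynomials $T_k$ as test functions. On $[-1,1]$ one has $|T_k|\le1$, so $\|T_k\|_H^2\le\int_{-1}^1\frac{dx}{\sqrt{1-x^2}}=\pi$ uniformly in $k$. On the other hand $T_k'(i)=kU_{k-1}(i)$ by \eqref{eq:identity1}, and the explicit value $U_{k-1}(i)=\frac{(i+\sqrt2\,i)^k-(i-\sqrt2\,i)^k}{2\sqrt2\,i}$ together with $|i+\sqrt2\,i|=1+\sqrt2>1$ forces $|T_k'(i)|\to\infty$ exponentially. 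Parity then separates the two coefficients: for even $k$ the derivative $T_k'$ is odd, so $T_k'(-i)=-T_k'(i)$ and $\Lambda(T_k)=(\mu-\nu)T_k'(i)$, whereas for odd $k$ it is even, so $T_k'(-i)=T_k'(i)$ and $\Lambda(T_k)=(\mu+\nu)T_k'(i)$. Since $|\Lambda(T_k)|\le\sqrt{\pi}\,\|f\|_H$ while $|T_k'(i)|\to\infty$, letting $k\to\infty$ through even and then odd integers forces $\mu-\nu=0$ and $\mu+\nu=0$, hence $\mu=\nu=0$. Thus $\Lambda$ annihilates every polynomial, and by the density of polynomials in $H$ we conclude $f\equiv0$, which proves completeness. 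This parallels the completeness of the Hermite DEK family established in \cite{A}, with the Gaussian decay there replaced by the exponential growth of $T_k'(i)$ off the real interval.
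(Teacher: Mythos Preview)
Your proof is correct but takes a genuinely different route from the paper's. The paper invokes the modified Christoffel formula (Theorem~\ref{GenChristoffel}) to write $(1+x^2)^2\hat T_n(x)$ as a finite linear combination of $\mathcal R_k$'s; then any $f\in H$ orthogonal to every $\mathcal R_n$ satisfies $\int_{-1}^1 f\,\hat T_n\,\frac{dx}{\sqrt{1-x^2}}=0$ for all $n$, and completeness of the Chebyshev system in $L^2(dx/\sqrt{1-x^2})$ (together with the weight comparison $\tfrac{1}{(1+x^2)^2}\ge\tfrac14$, which you also use) forces $f\equiv0$. Your argument bypasses the Christoffel machinery entirely: you identify $\mathrm{span}\{\mathcal R_n\}$ with the codimension-two subspace $V=\{p:p'(\pm i)=0\}$ by a dimension count, deduce that the annihilating functional must have the form $\mu\,p'(i)+\nu\,p'(-i)$, and then use the exponential growth of $|T_k'(i)|$ against the uniformly bounded $H$-norm of $T_k$ to force $\mu=\nu=0$. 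This is more self-contained and makes transparent \emph{why} completeness survives the two missing degrees: the only linear functionals that detect the gap are unbounded on $H$. The paper's version is shorter given the structural results already in hand, while yours would transfer more readily to settings where an explicit Christoffel-type inverse is not available. One small slip in wording: you say the map $p\mapsto p'(i)$ has kernel exactly $V$, but you evidently mean the pair $p\mapsto(p'(i),p'(-i))$; the conclusion you draw from it is correct.
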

\begin{proof}
Assume that $\{\mathcal{R}_n(x)\}$ is not complete in $L^2\left( \frac{dx}{\sqrt{1-x^2}(1+x^2)^2}, [-1,1]\right)$. \newline Then there exists $f(x) 
\in L^2\left( \frac{dx}{\sqrt{1-x^2}(1+x^2)^2}, [-1,1]\right)$ such that $f(x) \not \equiv 0$ and
\[\int_{-1}^{1}f(x)\mathcal{R}_n(x) \frac{dx}{\sqrt{1-x^2}(1+x^2)^2}=0
\] for all $n=0,1,2\dots$. By Theorem \ref{GenChristoffel}, we have for all $n=1,2,\dots$
\[
(1+x^2)^2\hat{T}_n(x) = \mathcal{R}_{n+2}(x)+\alpha_n\mathcal{R}_{n}(x)+\beta_n\mathcal{R}_{n-2}(x)
\]for real numbers $\alpha_n$ and $ \beta_n$. Thus for all $n=1,2,\dots$
\begin{align*}
\int_{-1}^1f(x)\hat{T}_n(x) \frac{dx}{\sqrt{1-x^2}}&=
\int_{-1}^1f(x)(1+x^2)^2\hat{T}_n(x) \frac{dx}{\sqrt{1-x^2}(1+x^2)^2}\\
&=\int_{-1}^1 f(x)\mathcal{R}_{n+2}(x)\frac{dx}{\sqrt{1-x^2}(1+x^2)^2}\\
&+\alpha_n\int_{-1}^1\tcnew{f(x)}\mathcal{R}_{n}(x)\frac{dx}{\sqrt{1-x^2}(1+x^2)^2}\\
&+\beta_n\int_{-1}^1\tcnew{f(x)}\mathcal{R}_{n-2}(x)\frac{dx}{\sqrt{1-x^2}(1+x^2)^2}\\
&=0.
\end{align*}
If $n=0$ then $(1+x^2)^2\hat{T}_0(x)=(1+x^2)^2S_0(x)=\mathcal{R}_2(x)+\alpha_0\mathcal{R}_0(x)$ so by the same reasoning
\[
\int_{-1}^1f(x)\hat{T}_0(x) \frac{dx}{\sqrt{1-x^2}}=0.
\]

Notice that 
$
\frac{1}{(1+x^2)^2}\geq \frac{1}{4}$ on $[-1,1]$ hence
\begin{align*}
\frac{1}{4}\int_{-1}^1|f(x) |^2\frac{dx}{\sqrt{1-x^2}} &\leq  \int_{-1}^{1}|f(x)|^2 \frac{dx}{\sqrt{1-x^2}(1+x^2)^2}\\
&<\infty
\end{align*} thus $f(x)\in L^2\left( \frac{dx}{\sqrt{1-x^2}} , [-1,1]\right)$. But since $\hat{T}_n(x)$ are complete in this space, the above implies that $f(x)\equiv 0$ which is a contradiction.
\end{proof}
\begin{remark}
Theorem \ref{thm:complete} can be generalized to DEK-type polynomials where the $P_n(x)$ are complete in $L^2\left(d\mu(x), [a,b] \right)$ for a compact subset $[a,b]$ of $\mathbb{R}$.
\end{remark}

\end{subsection}
\end{section}

\vspace{7mm}

\noindent {\bf Acknowledgments.} The authors would like to thank Alberto Gr\"unbaum for stimulating discussions and helpful correspondence \tcnew{and Juan Carlos Garc\'ia-Ardila for a careful reading of the manuscript and pointing out typos. Also, the authors are extremely grateful to the anonymous referees for helpful suggestions and remarks that improved the content and presentation of the paper.} This research was supported by the NSF DMS grant 2008844 and in part by the University of Connecticut Research Excellence Program.


\begin{thebibliography}{99}
\bibitem{A}
V.E. Adler, \textit{On a modification of Crum's method}. Teoret. Mat. Fiz. 101 (1994), no. 3, 323--330 (Russian); translation in Theoret. and Math. Phys. 101 (1994), no. 3, 1381--1386 (1995) 


\bibitem{BG}
G. Baker, P. Graves-Morris, \textit{Pad\'e approximants}.  
Second edition. Encyclopedia of Mathematics and its Applications, 59. Cambridge University Press, Cambridge, 1996.

\bibitem{BS95}
V.G. Bagrov, B.F. Samsonov, \textit{Darboux transformation, factorization and supersymmetry in one-dimensional quantum mechanics}. Teoret. Mat. Fiz. 104 (1995), no. 2, 356--367 (Russian); translation in Theoret. and Math. Phys. 104 (1995), no. 2, 1051--1060 (1996). 

\bibitem{BM04}
    M.I.~Bueno, F.~Marcell\'an, \textit{Darboux transformation and perturbation   of linear functionals}.   {Linear Algebra Appl.}, {Vol. 384} (2004), 215--242.

\bibitem{B89}
C. Brezinski, \textit{Ideas for further investigation on orthogonal polynomials and Pad\'e approximants}. (Spanish summary) Proceedings of the Third Symposium on Orthogonal Polynomials and Applications (Spanish) (Segovia, 1985), 1–8, Univ. Polit\'ec. Madrid, Madrid, 1989. 

\bibitem{B92}
C. Brezinski, \textit{Biorthogonality and its applications to numerical analysis}. Monographs and Textbooks in Pure and Applied Mathematics, 156. Marcel Dekker, Inc., New York, 1992.

\bibitem{DD04}
 M. Derevyagin, V. Derkach, \textit{Spectral problems for generalized Jacobi matrices}. Linear
Algebra Appl., Vol. 382 (2004), 1--24.

\bibitem{DD07}
M. Derevyagin, V. Derkach,
\textit{On the convergence of Pad\'e approximants of generalized Nevanlinna functions}. (Russian) 
Tr. Mosk. Mat. Obs. 68 (2007); translation in 
Trans. Moscow Math. Soc. 2007, 119--162

\bibitem{DD10}
{M. Derevyagin, V.Derkach}, \textit{Darboux transformations of Jacobi matrices and Pad\'e approximation}.
Linear Algebra Appl. 435, no. 12 (2011), 3056--3084. 

\bibitem{DGM14}
M. Derevyagin, J.C. Garc\'ia-Ardila, F. Marcell\'an, \textit{Multiple Geronimus transformations}. Linear Algebra Appl. 454 (2014), 158--183.    

\bibitem{DM14} M. Derevyagin, F. Marcell\'an, \textit{A note on the Geronimus transformation and Sobolev orthogonal polynomials}. Numer. Algorithms 67 (2014), no. 2, 271--287.


\bibitem{D92} S.Yu. Dubov, V.M. Eleonskii, and N.E Kulagin, \textit{Equidistant Spectra of Anharmonic Oscillators}.  Soviet Phys. JETP 75 (1992), no. 3, 446--451; translated from Zh. Èksper. Teoret. Fiz. 102 (1992), no. 3, 814--825 (Russian).

\bibitem{DEK94}
S.Yu. Dubov, V.M. Eleonskii, and N.E Kulagin, \textit{Equidistant spectra of anharmonic oscillators}. Chaos 4 (1994), no. 1, 47–53. 

\bibitem{D14}
A. Dur\'an, \textit{Exceptional Meixner and Laguerre orthogonal polynomials}.
J. Approx. Theory 184 (2014), 176--208. 
\tcnew{\bibitem{D21}
A. Dur\'an, \textit{Exceptional orthogonal polynomials}.
Lectures on orthogonal polynomials and special functions, 1-75, London Math. Soc. Lecture Note Ser., 464, Cambridge Univ. Press, Cambridge, 2021.}
\bibitem{GGM19}
M.A. Garc\'ia-Ferrero, D. G\'omez-Ullate, R. Milson, \textit{A Bochner type characterization theorem for exceptional orthogonal polynomials}. J. Math. Anal. Appl. 472 (2019), no. 1, 584--626.
\tcnew{\bibitem{GGM}
D. G\'omez-Ullate, Y. Grandati, R. Milson, \textit{Spectral theory of exceptional Hermite polynomials}. In From operator theory to orthogonal polynomials, combinatorics, and number theory - a volume in honor of Lance Littlejohn's 70th birthday, 173--196, Oper. Theory Adv. Appl., 285, Birkh\"auser/Springer, Cham, 2021.}
\tcnew{\bibitem{GKKM}
D. G\'omez-Ullate, A. Kasman, A.B.J. Kuijlaars, R. Milson, \textit{Recurrence relations for exceptional Hermite polynomials}. J. Approx. Theory 204 (2016), 1--16. }
\tcnew{\bibitem{GMM}
D. G\'omez-Ullate, F. Marcell\'an, R. Milson, \textit{Asymptotic and interlacing properties of zeroes of exceptional Jacobi and Laguerre polynomials}. J. Math. Anal. Appl. 399 (2013), no.2, 480--495.}\tcnew{\bibitem{GM20}
D. G\'omez-Ullate, R. Milson, \textit{Exceptional orthogonal polynomials and rational solutions to Painlev\'e equations}. In Orthogonal polynomials, 335-386, Tutor. Sch. Workshops Math. Sci., Birkh\"auser/Springer, Cham, 2020.}
\bibitem{GH97}
F. A. Gr\"unbaum, L. Haine, \textit{Bispectral Darboux transformations: an extension of the Krall polynomials}. 
Internat. Math. Res. Notices 1997, no. 8, 359--392.

\bibitem{GHH99}
F.A. Gr\"unbaum, L. Haine, E. Horozov, \textit{Some functions that generalize the Krall-Laguerre polynomials}.
J. Comput. Appl. Math. 106 (1999), no. 2, 271--297. 

\bibitem{KM20}
A. Kasman, R. Milson, \textit{The adelic Grassmannian and exceptional Hermite polynomials}. Math. Phys. Anal. Geom. 23 (2020), no. 4, Paper No. 40, 51 pp.
\tcnew{\bibitem{K70} D. Kershaw, \textit{A note on orthogonal polynomials}. Proc. Edinburgh Math. Soc. (2), 17, (1970), 83-93. }

\bibitem{KL79}
     M.G.~Kre\u{\i}n and H.~Langer,\textit{~On some extension problems which
    are closely connected with the theory of hermitian operators in a
    space $\Pi_{\kappa}$ III. Indefinite analogues of the Hamburger
    and Stieltjes moment problems}. Beitr{\"a}ge zur Anal. Vol.14
    (1979), 25-40.
\tcnew{\bibitem{KM15}
A.B.J. Kuijlaars, R. Milson, \textit{Zeroes of exceptional Hermite polynomials}. J. Approx. Theory 200 (2015), 28--39. }
\bibitem{LLK15}
C. Liaw, L. Littlejohn, J. S. Kelly,
\textit{Spectral analysis for the exceptional $X_m$-Jacobi equation}.
Electron. J. Differential Equations (2015), No. 194, 10 pp.

\bibitem{LS22}
D.S. Lubinsky, A. Sidi, 
\textit{Some biorthogonal polynomials arising in numerical analysis and approximation theory}.
J. Comput. Appl. Math. 403 (2022), Paper No. 113842, 13 pp. 

\bibitem{MatS79}
V.B. Matveev,  M.A. Salle, \textit{Differential-difference evolution equations.
II. Darboux transformation for the Toda lattice}.  Lett. Math. Phys.  Vol. 3, no. 5 (1979), 425--429.

\bibitem{IN87}
A. Iserles and S.P. N\o rsett, \textit{Two-step methods and bi-orthogonality}.
Math. Comp. 49 (1987), no. 180, 543--552.

\bibitem{IN88}
A. Iserles and S.P. N\o rsett, \textit{On the theory of biorthogonal polynomials}. \tcnew{Trans. Amer. Math. Soc. 306 (1988), no. 2, 455-474.}

\bibitem{IS87}
A. Iserles, E.B. Saff, \textit{Bi-orthogonality in rational approximation}. 
J. Comput. Appl. Math. 19 (1987), no. 1, 47--54.
\bibitem{I09}
M. E. H. Ismail, \textit{Classical and quantum orthogonal polynomials in one variable}. With two chapters by Walter Van Assche. With a foreword by Richard A. Askey. Reprint of the 2005 original. Encyclopedia of Mathematics and its Applications, 98. Cambridge University Press, Cambridge, 2009
\bibitem{R88}
M. Reach, \textit{Generating difference equations with the Darboux transformation}. Comm. Math. Phys. 119 (1988), no. 3, 385--402.

\bibitem{SO95A}
B.F. Samsonov, I.N. Ovcharov, \textit{The Darboux transformation and nonclassical orthogonal polynomials}. Izv. Vyssh. Uchebn. Zaved. Fiz. 38 (1995), no. 4, 58--65 (Russian); translation in Russian Phys. J. 38 (1995), no. 4, 378--384.

\bibitem{SO95B}
B.F. Samsonov, I.N. Ovcharov, \textit{The Darboux transformation and exactly solvable potentials with a quasi-equidistant spectrum}. Izv. Vyssh. Uchebn. Zaved. Fiz. 38 (1995), no. 8, 3--10 (Russian); translation in Russian Phys. J. 38 (1995), no. 8, 765--771 (1996). 

\bibitem{STZ10}
R. Sasaki, S. Tsujimoto,  A. Zhedanov, \textit{Exceptional Laguerre and Jacobi polynomials and the corresponding potentials through Darboux-Crum transformations}. J. Phys. A 43 (2010), no. 31, 315204, 20 pp. 

\end{thebibliography}
\end{document}